\documentclass[onecolumn,draftcls,11pt]{IEEEtran}

\usepackage[utf8]{inputenc}
\usepackage[T1]{fontenc}
\usepackage{url}
\usepackage{ifthen}
\usepackage{cite}
\usepackage[cmex10]{amsmath}
\usepackage{amssymb}
\usepackage{amsthm}
\usepackage{mathtools}
\usepackage{graphicx}
\usepackage{bbm}
\usepackage{xcolor}
\usepackage{tikz}
\usepackage{lmodern}
\usepackage{todonotes}
\usepackage[most]{tcolorbox}
\usepackage{algorithm}
\usepackage{algpseudocode}

\usetikzlibrary{
    patterns,
    calc,
    intersections,
    shapes,
    shapes.misc,
    arrows.meta
}

\usepackage[
    colorlinks=true,
    linkcolor=blue,
    citecolor=blue,
    urlcolor=blue
]{hyperref}

\newtheorem{theorem}{Theorem}
\newtheorem{lemma}{Lemma}
\newtheorem{corollary}{Corollary}

\theoremstyle{definition}

\newcommand{\Nnodes}{n}

\newcommand{\ndim}{d}

\newcommand{\Nhonest}{h}

\newcommand{\Nadversarial}{t}

\newcommand{\allnodesset}{\mathcal{V}}

\newcommand{\honestset}{\mathcal{K}}

\newcommand{\adversaryset}{\mathcal{Q}}

\newcommand{\inputset}{\mathcal{U}}

\newcommand{\bU}{\mathbf{U}}
\newcommand{\bu}{\mathbf{u}}

\newcommand{\bY}{\mathbf{Y}}
\newcommand{\by}{\mathbf{y}}

\newcommand{\bN}{\mathbf{N}}
\newcommand{\bn}{\mathbf{n}}

\newcommand{\bx}{\mathbf{x}}
\newcommand{\bz}{\mathbf{z}}

\newcommand{\bL}{\mathbf{L}}
\newcommand{\bR}{\mathbf{R}}

\newcommand{\mE}{\mathbb{E}}
\newcommand{\DC}{\mathsf{DC}}
\newcommand{\AD}{\mathsf{AD}}
\newcommand{\sU}{\mathsf{U}}
\newcommand{\est}{\mathsf{est}}
\newcommand{\mse}{\mathsf{MSE}}
\newcommand{\pa}{\mathsf{PA}}

\newcommand{\modeldep}{\mathrm{dep}}
\newcommand{\modelind}{\mathrm{ind}}

\newcommand{\acceptanceevent}[1]{\mathcal{A}_{#1}}

\newcommand{\gameinstance}[1]{%
    \mathfrak{G}_{\ndim,\Nnodes,\Nhonest,\Nadversarial}^{#1}
}

\newcommand{\performanceregion}[1]{%
    \mathcal{R}_{\ndim,\Nnodes,\Nhonest,\Nadversarial}^{#1}
}

\DeclareMathOperator{\supp}{supp}
\DeclareMathOperator*{\argmax}{arg\,max}
\DeclareMathOperator*{\argmin}{arg\,min}

\begin{document}

\title{
    Game of Coding under Computation-Dependent Adversarial Noise
    \thanks{
        The work of Mohammad Ali Maddah-Ali and Hanzaleh Akbari
        Nodehi has been partially supported by the National Science
        Foundation under Grant CCF-2348638.
    }
}

\author{
    Hanzaleh Akbari Nodehi, Mohammad Ali Maddah-Ali\\
    University of Minnesota, Twin Cities, Minneapolis, MN, USA\\
    Email: \{akbar066,maddah\}@umn.edu
}
\maketitle

\begin{abstract}

The game of coding framework was introduced to extend coding-theoretic
recovery beyond its traditional limit, under which the number of
honest reports must exceed the number of adversarial or corrupted
reports. It does so by exploiting the rational behavior of adversarial
participants and their incentive to keep the system live. Existing
game-of-coding formulations, however, assume that the
adversarial-noise distribution is independent of the realized
ground-truth computation. This assumption may be restrictive when an
informed adversary can adapt its reports to the value being computed.

In this paper, we study the game of coding with input-dependent
adversarial noise. We introduce a unified multi-node,
multidimensional formulation. For every family of conditional adversarial-noise
distributions, we construct an input-independent joint noise
distribution, and prove that this reduction exactly preserves the probability of
acceptance and the accepted mean-squared estimation error. Consequently, the input-dependent and
input-independent models have identical achievable performance
regions,  and the same
equilibrium utilities. 

\end{abstract}

\section{Introduction}\label{sec:introduction}

Coding theory uses redundancy to protect information against
unreliable or adversarial components and is fundamental to
communication, storage, and computing systems
\cite{SudanBook,ZamirCoded,yu2017polynomial,yu2019lagrange}. It has also been extended from
discrete to analog domains, enabling approximate recovery \cite{roth2020analog,jahani2018codedsketch,
BACC}. Despite
these advances, both discrete and analog coding schemes rely on a
fundamental trust requirement. To illustrate, consider a coded
computing system with node set
$\allnodesset=\honestset\cup\adversaryset$, where nodes in
$\honestset$ follow the prescribed protocol, while nodes in
$\adversaryset$ may deviate arbitrarily. Repetition coding requires
\(
    |\honestset|\geq|\adversaryset|+1
\)
for error-free recovery. Similarly, an
$(K,\Nnodes)$ Reed--Solomon code \cite{SudanBook} requires
\(
    |\honestset|\geq|\adversaryset|+K
\),
while Lagrange coded computing \cite{yu2019lagrange} for a polynomial of degree $d$ requires
\(
    |\honestset|>|\adversaryset|+(K-1)d
\). Thus, from repetition coding to more
advanced discrete and analog coding schemes, reliable recovery
requires the honest nodes to outnumber the adversarial nodes by a
scheme-dependent margin.

This limitation is particularly relevant to decentralized systems in
which computationally intensive tasks are outsourced to external
workers. Examples include decentralized machine learning, oracle
networks, and blockchain-based computation, where participants may be
anonymous, dynamically selected, and economically motivated
\cite{bitcoin2008bitcoin,buterin2013ethereum,sliwinski2019blockchains,
han2021fact,gans2023zero}. In such environments, the data collector
(DC) may not be able to certify an honest majority before assigning a
task. Cryptographic verification can provide strong correctness
guarantees, but proof generation may impose substantial computational
and latency costs and may not naturally accommodate approximate
computations
\cite{thaler2022proofs,feng2021zen,liu2021zkcnn,xing2023zero,
mohassel2017secureml,lee2024vcnn,weng2021mystique,
chen2022interactive,garg2022succinct,setty2012taking,
garg2023experimenting}. Coding-based redundancy is comparatively
lightweight and naturally supports approximate computation, but its
traditional guarantees deteriorate when adversarial workers are not a
minority.

A key feature of many decentralized systems is that participants are
not merely malicious but economically rational. A worker is commonly
rewarded only when its submitted computation is accepted and the
system remains functional. An adversarial worker therefore faces two
competing incentives. It seeks to increase the error of the accepted
estimate, but an excessively inconsistent report may be rejected,
preventing the adversary from receiving a reward or influencing the
output. The DC faces the opposite accuracy objective while also
preferring the computation to remain live. This creates a strategic
interaction between accuracy and liveness that is not captured by a
pure worst-case adversarial model.

The \emph{game of coding} framework exploits the rationality of
adversarial participants by modeling the interaction between the DC
and the adversary as a Stackelberg game
\cite{GoCJournal,GoDSybil,nodehi2026game,nodehi2026mathsf,
nodehi2025unknown,akbari2026learning,nodehi2026gameBits}. The DC,
acting as the leader, commits to an acceptance rule, while the
adversary observes this rule and selects its reporting strategy. The
utilities of the players depend on two operational quantities: the
probability of acceptance, which captures system liveness, and the
estimation error conditioned on acceptance, which captures accuracy.
Since an adversary can receive a reward or influence the output only
when its reports are accepted, it must balance the desire to increase
the estimation error against the need to keep the system live.

The foundational work in \cite{GoCJournal} introduced the theoretical
framework for the scalar two-node setting with one honest node and one
adversarial node. It formalized the acceptance--estimation mechanism,
defined the utilities of the DC and the adversary, characterized their
Stackelberg equilibrium, and derived the corresponding optimal
strategies. Importantly, it showed that the system can maintain
positive liveness and provide a useful estimate even when one of the
two nodes is adversarial. This is a regime in which classical
repetition coding cannot guarantee reliable decoding because there is
no honest majority.

The framework was subsequently extended to the scalar multi-node
setting in \cite{GoDSybil}, with particular emphasis on
\emph{Sybil resistance}. This property is essential in open
decentralized systems, where an adversary may create many identities
and participate through multiple apparently distinct nodes. The
analysis showed that increasing the number of adversarial identities
does not improve the adversary's equilibrium utility or reduce the
DC's equilibrium utility. In this sense, the adversary gains no
additional strategic power by creating Sybil nodes.

The scalar restriction was removed in \cite{nodehi2026game}, which
extended the game of coding to vector-valued computations. That work
considered a general multidimensional computation output and
characterized the equilibrium acceptance rule, adversarial strategy,
and estimation performance in the vector setting. This extension is
particularly relevant to applications in which the outsourced result
is naturally a vector, including gradients, model updates, and
intermediate representations in machine learning.

The works above primarily considered a single outsourced computation.
The VISTA framework in \cite{nodehi2026mathsf} extended the game of
coding to multi-round iterative optimization, such as training a
machine-learning model. In this setting, the DC must account not only
for the accuracy and liveness of one round, but also for how each
accepted or rejected update affects the entire optimization
trajectory. VISTA jointly adapts the acceptance threshold and the
learning rate to obtain reliable long-term optimization performance
in adversary-dominated environments.

The preceding formulations assume that the utility of the adversary
is known to the DC. This complete-information assumption was relaxed
in \cite{nodehi2025unknown}. There, the DC learns the adversary's
response from repeated interactions and selects an acceptance
threshold whose resulting utility is close to that of the optimal
threshold available under complete information. The primary objective
is therefore the quality of the final threshold recommended after the
learning process.

A complementary learning objective was studied in
\cite{akbari2026learning}. Rather than evaluating only the final
threshold, that work accounts for the utility lost throughout the
learning trajectory. The problem is formulated through cumulative
regret, thereby penalizing poor threshold choices made while the DC is
still learning the adversary's behavior. The proposed learning policy
is designed to control this cumulative cost while progressively
identifying near-optimal acceptance thresholds.

Finally, \cite{nodehi2026gameBits} broadened the coding component of
the framework beyond repetition coding. It studied how the
game-theoretic principles underlying the game of coding can be
combined with more advanced coding constructions, including linear
codes. This direction shows that the framework is not tied to
duplicating a single computation, but can also exploit the structure
and efficiency of general coded computations.

Despite these developments, the existing game-of-coding formulations
share a common assumption: the adversarial-noise distribution is
independent of the realized ground-truth computation. Under the
additive representation $\bY_q=\bU+\bN_q$, the adversary selects one
noise distribution for $\bN_q$ and uses it for every realization of
$\bU$. The strategy may depend on the acceptance rule, the system
parameters, and the players' utilities, but it does not adapt to the
particular value being computed.

This assumption may be restrictive in practice. For example,
adversarial-example attacks construct perturbations as functions of
the particular input, the current model, and, in targeted attacks, the
desired output
\cite{szegedy2014intriguing,goodfellow2015explaining,
madry2018towards,carlini2017towards}. Similarly, model-poisoning
attacks in distributed and federated learning may adapt malicious
updates to the current global model, selected targets, available
information about benign updates, and the aggregation rule
\cite{bhagoji2019analyzing,baruch2019little,
bagdasaryan2020backdoor,fang2020local}. In both cases, the adversarial
action is tailored to the realized input or the current state of the
computation rather than being sampled from one fixed distribution.

The same issue arises naturally in the game of coding. If $\bU$
represents an inference output, a gradient, a model update, or another
state-dependent computation, an informed adversary may adapt the
direction, magnitude, and correlation structure of its reports to the
realized value of $\bU$. We model such behavior through a family of
conditional joint noise PDFs
$\{g_{\bu}\}_{\bu\in\inputset}$, where $g_{\bu}$ is the adversarial
noise distribution used when $\bU=\bu$. Since the input-independent
strategy set is contained in this more general strategy set, it is not
immediate whether the earlier game-of-coding guarantees remain valid.

\begin{tcolorbox}[
    colback=gray!10,
    colframe=gray!60!black,
    boxrule=0.6pt,
    arc=1mm,
    left=2mm,
    right=2mm,
    top=1.5mm,
    bottom=1.5mm
]
\centering
\textbf{Central question.}
Does allowing the adversarial-noise distribution to depend
arbitrarily on the realized ground-truth computation enlarge the
adversary's achievable performance region or change the resulting
Stackelberg equilibrium?
\end{tcolorbox}

Answering this question across the existing game-of-coding models
requires a common formulation. The previous works consider different
numbers of nodes, computation dimensions, and acceptance and estimation
rules. We therefore introduce a unified $\Nnodes$-node,
$\ndim$-dimensional model. 
As shown in Appendix~\ref{app:alignment_previous_goc}, this mechanism
reduces exactly to the scalar max--min and midrange mechanism when
$\ndim=1$, and to the two-node vector distance and midpoint mechanism
when $\Nnodes=2$.

Our main result shows that the additional dependence on $\bU$ provides
no performance advantage to the adversary. Given an arbitrary
conditional strategy $\{g_{\bu}\}_{\bu\in\inputset}$, we construct one
input-independent joint noise PDF by averaging $g_{\bu}$ over the
actual distribution of $\bU$. This construction applies to an
arbitrary distribution of $\bU$ and permits arbitrary statistical
dependence among the adversarial nodes.
We show that every input-dependent strategy has an
input-independent counterpart with exactly the same probability of
acceptance and accepted mean-squared estimation error. Conversely,
every input-independent strategy is already an admissible
input-dependent strategy. Therefore, the two models have identical
achievable performance regions for every acceptance policy.

\subsection{Main Contributions}

The main contributions of this paper are summarized as follows.

\begin{enumerate}
    \item \textbf{A unified game-of-coding formulation.}
    We formulate an $\Nnodes$-node, $\ndim$-dimensional game of coding
    with multiple honest and adversarial nodes, an arbitrary
    distribution for $\bU$, a joint input-dependent adversarial-noise. The model allows
    adversary-dominated regimes and arbitrary dependence among the
    adversarial reports. Appendix~\ref{app:alignment_previous_goc}
    shows that the scalar multi-node and two-node vector-valued
    mechanisms are exact special cases.

    \item \textbf{An exact reduction to an input-independent
    adversary.}
    For every conditional adversarial strategy
    $\boldsymbol{g}=\{g_{\bu}\}_{\bu\in\inputset}$, we construct an
    averaged input-independent joint noise PDF. We prove that the
    resulting strategy has exactly the same probability of acceptance
    and accepted mean-squared estimation error as the original
    strategy for every acceptance policy selected by the DC.

    \item \textbf{Equality of the achievable performance regions.}
    We prove that the input-dependent and input-independent models
    induce the same set of achievable probability-of-acceptance and
    mean-squared-error pairs for every threshold. Thus, allowing the
    adversary to adapt its noise distribution to the realization of
    $\bU$ does not enlarge its achievable performance region.

    \item \textbf{Equivalence of the Stackelberg games.}
    We show that the two models induce the same adversarial
    best-response performance and the same worst-case objective for
    the DC at every threshold. Consequently, they have the same set of
    optimal thresholds and the same equilibrium utilities.
\end{enumerate}

The remainder of the paper is organized as follows.
Section~\ref{sec:problem_formulation} presents the unified
input-dependent and input-independent game formulations.
Section~\ref{sec:main_results} states the reduction,
performance-region equivalence, and Stackelberg-equivalence results.
Appendix~\ref{app:alignment_previous_goc} establishes the relationship
with the previous scalar and vector-valued game-of-coding models. The
remaining appendices contain the proofs of the main results.

\subsection{Notation}\label{sec:notation}

We  denote random variables using uppercase letters and
deterministic values, using lowercase letters. We
distinguish vectors from scalars by using boldface type for vectors and
standard type for scalars. For example, $\bU$ represents a random
vector, whereas $\bu$ denotes one of its deterministic realizations.
Similarly, $U$ represents a scalar random variable, whereas $u$
denotes a deterministic scalar.

For a positive integer $m$, we use
$[m]\triangleq\{1,\ldots,m\}$.
For a finite set $\mathcal{S}$, its cardinality is denoted by
$|\mathcal{S}|$. The notation $\mathbb{R}^{\ndim}$ denotes the
$\ndim$-dimensional Euclidean space, and $\|\cdot\|_2$ denotes the
Euclidean norm.
For a random vector $\mathbf{X}$ and one of its realizations
$\mathbf{x}$, we denote its probability density function by
$f_{\mathbf{X}}(\mathbf{x})$. Similarly,
$f_{\mathbf{X}\mid\mathbf{Z}}(\mathbf{x}\mid\mathbf{z})$ denotes the
conditional PDF of $\mathbf{X}$ given $\mathbf{Z}=\mathbf{z}$. The
support of $f_{\mathbf{X}}$ is denoted by
$\supp(f_{\mathbf{X}})$.

For any set $\mathcal{S}\subseteq\mathbb{R}^{\ndim}$ and any function
$f:\mathcal{S}\to\mathbb{R}$, the notation $\argmax_{x\in\mathcal{S}} f(x)$
represents the set of all elements $x\in\mathcal{S}$ that maximize
$f(x)$. Similarly, $\argmin_{x\in\mathcal{S}} f(x)$
represents the set of all elements that minimize $f(x)$. For $a,b\in\mathbb{R}$ with $a\leq b$, the notation $[a,b]$
represents the closed interval $[a,b]\triangleq
    \left\{
        x\in\mathbb{R}:a\leq x\leq b
    \right\}$.


\section{Problem Formulation}\label{sec:problem_formulation}

We consider a system consisting of a data collector (DC) and
$\Nnodes\geq 2$ computational nodes. Let us define the set of all
computational nodes as $\allnodesset\triangleq[\Nnodes]$. The DC aims
to estimate an $\ndim$-dimensional random vector
$\bU\in\mathbb{R}^{\ndim}$, referred to as the ground-truth
computation. We assume that $\bU$ has an arbitrary probability
distribution with support $\inputset$. The DC does not have direct
access to the realization of $\bU$ and must estimate it using the
reports received from the computational nodes.

We partition $\allnodesset$ into a set of honest nodes $\honestset$
and a set of adversarial nodes $\adversaryset$. We define their
cardinalities as $|\honestset|=\Nhonest$ and
$|\adversaryset|=\Nadversarial$, respectively. These sets satisfy
\begin{align}
    \honestset\cap\adversaryset=\emptyset,\qquad
    \honestset\cup\adversaryset=\allnodesset,\qquad
    \Nhonest+\Nadversarial=\Nnodes.
    \label{eq:node_partition}
\end{align}
We assume that $\Nhonest\geq1$ and $\Nadversarial\geq1$, but impose no
honest-majority condition. In particular, our formulation permits
$\Nadversarial\geq\Nhonest$ and therefore includes
adversary-dominated regimes. This feature distinguishes the game of
coding formulation from classical adversarial error-correction
settings, in which reliable decoding generally requires sufficient
honest redundancy relative to the number of adversarial nodes
\cite{SudanBook}. The identities of the adversarial nodes are unknown
to the DC.

Each node $i\in\allnodesset$ sends a report
$\bY_i\in\mathbb{R}^{\ndim}$ to the DC. An honest node
$k\in\honestset$ and an adversarial node
$q\in\adversaryset$ report\footnote{The additive representation in
\eqref{eq:adversarial_report} imposes no restriction on the
adversary's reporting strategy. Indeed, any possibly randomized report
$\bY_q$ can be written as $\bY_q=\bU+\bN_q$ by defining
$\bN_q\triangleq\bY_q-\bU$. Since the conditional distribution of
$\bN_q$ is allowed to depend arbitrarily on the realization of $\bU$,
as formalized in \eqref{eq:conditional_adversarial_pdf}, this model
captures arbitrary adversarial reports.}, respectively,
\begin{align}
    \bY_k
    &=
    \bU+\bN_k,
    \qquad k\in\honestset,
    \label{eq:honest_report}\\
    \bY_q
    &=
    \bU+\bN_q,
    \qquad q\in\adversaryset.
    \label{eq:adversarial_report}
\end{align}
The honest noises $\{\bN_k\}_{k\in\honestset}$ are independent and
identically distributed according to a publicly known PDF
$f_{\bN}(\cdot)$. We assume that their magnitudes are bounded according
to
\begin{align}
    \Pr\!\left(\|\bN_k\|_2>\Delta\right)=0,
    \qquad k\in\honestset,
    \label{eq:honest_noise_bound}
\end{align}
where $\Delta>0$ is publicly known. The honest noise models
inaccuracies caused by approximate computation, randomized
algorithms, quantization, compression, measurement error, or oracle
inaccuracy.

Let us define the joint honest- and adversarial-noise vectors as
$\underline{\bN}_{\honestset}\triangleq
(\bN_k:k\in\honestset)$ and
$\underline{\bN}_{\adversaryset}\triangleq
(\bN_q:q\in\adversaryset)$, respectively. The adversary has access to
the exact realization of $\bU$ and may coordinate the noises of all
adversarial nodes. For every realization $\bu\in\inputset$, let
$\underline{\bz}_{\adversaryset}\triangleq
(\bz_q:q\in\adversaryset)$ denote a possible realization of
$\underline{\bN}_{\adversaryset}$. We use
$f_{\underline{\bN}_{\adversaryset}\mid\bU}
(\underline{\bz}_{\adversaryset}\mid\bu)$ to denote the conditional
joint PDF of the adversarial-noise vector
$\underline{\bN}_{\adversaryset}$ given $\bU=\bu$, and define
\begin{align}
    g_{\bu}\!\left(\underline{\bz}_{\adversaryset}\right)
    \triangleq
    f_{\underline{\bN}_{\adversaryset}\mid\bU}
    \!\left(
        \underline{\bz}_{\adversaryset}\mid\bu
    \right).
    \label{eq:conditional_adversarial_pdf}
\end{align}
Thus, for each realization $\bU=\bu$, the adversary may select a
different joint noise PDF $g_{\bu}(\cdot)$. Therefore, the distribution
of the adversarial noises may depend arbitrarily on $\bU$, and the
noises of the adversarial nodes may also be arbitrarily correlated with
one another.

We assume that the honest-noise vector is independent of $\bU$ and,
conditional on $\bU=\bu$, is independent of the adversarial-noise
vector. Equivalently, for every $\bu\in\inputset$, the conditional
joint PDF satisfies
\begin{align}
&f_{\underline{\bN}_{\honestset},
\underline{\bN}_{\adversaryset}\mid\bU}
\left(
    \underline{\bx}_{\honestset},
    \underline{\bz}_{\adversaryset}
    \mid\bu
\right)
\nonumber\\
&\qquad =
f_{\underline{\bN}_{\honestset}}
\left(
    \underline{\bx}_{\honestset}
\right)
g_{\bu}
\left(
    \underline{\bz}_{\adversaryset}
\right).
\label{eq:conditional_honest_adversarial_factorization}
\end{align}
Hence, the adversary may adapt its noise distribution to the
realization of $\bU$, but it cannot observe or statistically
couple its reports with the realized honest noises.

We define an input-dependent adversarial strategy as the collection
$\boldsymbol{g}\triangleq\{g_{\bu}\}_{\bu\in\inputset}$. The
corresponding adversarial action set is
\begin{align}
\Lambda_{\AD}^{\modeldep}(\Nadversarial)
\triangleq
\Big\{
    \boldsymbol{g}=\{g_{\bu}\}_{\bu\in\inputset}
    \,\Big|\,
    g_{\bu}
    \text{ is a valid joint PDF on }
    \big(\mathbb{R}^{\ndim}\big)^{\Nadversarial},
    \ \forall\bu\in\inputset
\Big\}.
\label{eq:dependent_adversarial_action_set}
\end{align}
As in the previous game-of-coding formulations
\cite{GoCJournal,GoDSybil,nodehi2026game,nodehi2026mathsf,
nodehi2025unknown,akbari2026learning,nodehi2026gameBits}, this action set also includes discrete
and mixed strategies represented using Dirac delta functions. For
comparison, we define the input-independent adversarial action set as
\begin{align}
\Lambda_{\AD}^{\modelind}(\Nadversarial)
\triangleq
\Big\{
    \boldsymbol{g}\in
    \Lambda_{\AD}^{\modeldep}(\Nadversarial)
    \,\Big|\,
    \exists\,g
    \text{ such that }
    g_{\bu}=g,
    \ \forall\bu\in\inputset
\Big\}.
\label{eq:independent_adversarial_action_set}
\end{align}
Hence, $\Lambda_{\AD}^{\modelind}(\Nadversarial)$ represents the
special case in which the joint adversarial-noise distribution does
not depend on the realization of $\bU$. Therefore, one can verify from
\eqref{eq:dependent_adversarial_action_set} and
\eqref{eq:independent_adversarial_action_set} that
\begin{align}
    \Lambda_{\AD}^{\modelind}(\Nadversarial)
    \subseteq
    \Lambda_{\AD}^{\modeldep}(\Nadversarial).
    \label{eq:independent_subset_dependent_action_set}
\end{align}

Upon receiving the reports, let us define the report tuple as
$\underline{\bY}\triangleq(\bY_1,\ldots,\bY_{\Nnodes})$. Consider a
deterministic realization
$\underline{\by}=(\by_1,\ldots,\by_{\Nnodes})$, where, for every
$i\in\allnodesset$, we write
$\by_i=([\by_i]_1,\ldots,[\by_i]_{\ndim})\in\mathbb{R}^{\ndim}$, and
$[\by_i]_r$ denotes the $r$-th coordinate of $\by_i$. The DC processes
the received reports in two stages.

\begin{enumerate}
    \item \textbf{Acceptance:} For every coordinate $r\in[\ndim]$, let us define the
    coordinatewise lower and upper values as
    \begin{align}
        L_r(\underline{\by})
        \triangleq
        \min_{i\in\allnodesset}[\by_i]_r, \quad R_r(\underline{\by})
        \triangleq
        \max_{i\in\allnodesset}[\by_i]_r.
        \label{eq:coordinatewise_upper_value}
    \end{align}
    We define the corresponding coordinatewise lower and upper vectors
    as
    \begin{align}
        \bL(\underline{\by})
        \triangleq
        \left(
            L_1(\underline{\by}),
            \ldots,
            L_{\ndim}(\underline{\by})
        \right),
        \label{eq:coordinatewise_lower_vector}
    \end{align}
    and
    \begin{align}
        \bR(\underline{\by})
        \triangleq
        \left(
            R_1(\underline{\by}),
            \ldots,
            R_{\ndim}(\underline{\by})
        \right).
        \label{eq:coordinatewise_upper_vector}
    \end{align}

    For a threshold parameter $\eta$, we define the acceptance event as
    \begin{align}
        \acceptanceevent{\eta}
        \triangleq
        \left\{
            \left\|
                \bR(\underline{\bY})
                -
                \bL(\underline{\bY})
            \right\|_2
            \leq
            \eta\Delta
        \right\}.
        \label{eq:coordinatewise_acceptance_event}
    \end{align}
    For an adversarial strategy
    $\boldsymbol{g}\in
    \Lambda_{\AD}^{\modeldep}(\Nadversarial)$, we define the
    probability of acceptance as
    \begin{align}
        \pa(\boldsymbol{g},\eta)
        \triangleq
        \Pr\!\left(
            \acceptanceevent{\eta}
        \right).
        \label{eq:probability_of_acceptance}
    \end{align}
    The probability in \eqref{eq:probability_of_acceptance} is
    evaluated with respect to the joint randomness of the
    ground-truth vector $\bU$, the honest noises
    $\{\bN_k\}_{k\in\honestset}$, and the adversarial noises
    $\{\bN_q\}_{q\in\adversaryset}$, where, conditional on
    $\bU=\bu$, the joint adversarial-noise vector is generated
    according to $g_{\bu}(\cdot)$ defined in
    \eqref{eq:conditional_adversarial_pdf}.

    \item \textbf{Estimation:} If the reports are accepted, the DC employs the fixed
    coordinatewise midrange estimator
    \begin{align}
        \widehat{\bU}
        \triangleq
        \est(\underline{\bY})
        \triangleq
        \frac{
            \bR(\underline{\bY})
            +
            \bL(\underline{\bY})
        }{2}.
        \label{eq:coordinatewise_midrange_estimator}
    \end{align}
    For an adversarial strategy
    $\boldsymbol{g}\in
    \Lambda_{\AD}^{\modeldep}(\Nadversarial)$, we define the accepted
    mean-squared estimation error as
    \begin{align}
        \mse(\boldsymbol{g},\eta)
        \triangleq
        \mE
        \left[
            \left\|
                \bU-\widehat{\bU}
            \right\|_2^2
            \,\middle|\,
            \acceptanceevent{\eta}
        \right].
        \label{eq:accepted_mse}
    \end{align}
    The expectation in \eqref{eq:accepted_mse} is evaluated with
    respect to the joint randomness of $\bU$, the honest noises
    $\{\bN_k\}_{k\in\honestset}$, and the adversarial noises
    $\{\bN_q\}_{q\in\adversaryset}$. 
\end{enumerate}

The relationship between
\eqref{eq:coordinatewise_acceptance_event}--%
\eqref{eq:coordinatewise_midrange_estimator}
and the acceptance and estimation rules used in the previous
game-of-coding papers is discussed in Appendix \ref{app:alignment_previous_goc}.
In particular, the proposed rule reduces to the scalar max--min
and midrange rules when $\ndim=1$
\cite{GoCJournal,GoDSybil}, and to the two-node vector distance and
midpoint rules when $\Nnodes=2$ \cite{nodehi2026game}.

We next define the action set of the DC. A natural lower endpoint for
the threshold is $\eta=2$, since the distance between two honest
reports can be as large as $2\Delta$ under
\eqref{eq:honest_noise_bound}. Accordingly, we define the DC's action
set as
\begin{align}
    \Lambda_{\DC}
    \triangleq
    [2,\infty).
    \label{eq:dc_action_set}
\end{align}
Note that, the threshold $\eta$ controls the tradeoff between liveness and
estimation accuracy. Increasing $\eta$ generally increases the
probability of acceptance, but also gives the adversary greater
flexibility to distort the estimate. Decreasing $\eta$ restricts the
possible estimation error, but increases the risk that the submitted
reports are rejected. Acceptance is also valuable to the adversary in
incentive-oriented decentralized systems, since the adversary can
receive a reward or influence the final estimate only when the reports
are accepted \cite{nodehi2026gameBits}.

We model the interaction between the DC and the adversary as a
Stackelberg game \cite{von2010market}. The DC acts as the leader and
first selects $\eta\in\Lambda_{\DC}$. After observing $\eta$, the
adversary acts as the follower and selects
$\boldsymbol{g}\in
\Lambda_{\AD}^{\modeldep}(\Nadversarial)$. Once the realization
$\bU=\bu$ becomes available to the adversary, it generates its joint
noise vector according to $g_{\bu}$.

We define the utilities of the DC and the adversary as
\begin{align}
    \sU_{\DC}(\boldsymbol{g},\eta)
    &\triangleq
    Q_{\DC}
    \left(
        \mse(\boldsymbol{g},\eta),
        \pa(\boldsymbol{g},\eta)
    \right),
    \label{eq:dc_utility}\\
    \sU_{\AD}(\boldsymbol{g},\eta)
    &\triangleq
    Q_{\AD}
    \left(
        \mse(\boldsymbol{g},\eta),
        \pa(\boldsymbol{g},\eta)
    \right).
    \label{eq:adversary_utility}
\end{align}
The utility functions $Q_{\DC}$ and $Q_{\AD}$ are publicly known. We
assume that $Q_{\DC}$ is non-increasing in its first argument and
non-decreasing in its second argument. We assume that $Q_{\AD}$ is
strictly increasing in both arguments.

For $\sigma\in\{\modeldep,\modelind\}$, let
$\gameinstance{\sigma}$ denote the Stackelberg game defined above when
the adversary selects its strategy from
$\Lambda_{\AD}^{\sigma}(\Nadversarial)$. Hence,
$\gameinstance{\modeldep}$ denotes the input-dependent game and
$\gameinstance{\modelind}$ denotes the corresponding
input-independent game. All remaining system parameters and utility
functions are identical in these two game instances.

For a fixed threshold $\eta\in\Lambda_{\DC}$, we define the
adversary's best-response set in $\gameinstance{\sigma}$ as
\begin{align}
    \mathcal{B}_{\AD,\sigma}^{\eta}
    \triangleq
    \argmax_{
        \boldsymbol{g}\in
        \Lambda_{\AD}^{\sigma}(\Nadversarial)
    }
    \sU_{\AD}(\boldsymbol{g},\eta).
    \label{eq:adversarial_best_response}
\end{align}
Different strategies in
$\mathcal{B}_{\AD,\sigma}^{\eta}$ may provide different utilities to
the DC. We therefore define the set of worst-case adversarial best
responses as
\begin{align}
    \overline{\mathcal{B}}_{\AD,\sigma}^{\eta}
    \triangleq
    \argmin_{
        \boldsymbol{g}\in
        \mathcal{B}_{\AD,\sigma}^{\eta}
    }
    \sU_{\DC}(\boldsymbol{g},\eta).
    \label{eq:worst_case_adversarial_response}
\end{align}
The optimal threshold selected by the DC is defined by
\begin{align}
    \eta_{\sigma}^{*}
    \in
    \argmax_{\eta\in\Lambda_{\DC}}
    \;
    \min_{
        \boldsymbol{g}\in
        \mathcal{B}_{\AD,\sigma}^{\eta}
    }
    \sU_{\DC}(\boldsymbol{g},\eta).
    \label{eq:optimal_threshold}
\end{align}
For any
$\boldsymbol{g}_{\sigma}^{*}\in
\overline{\mathcal{B}}_{\AD,\sigma}^{\eta_{\sigma}^{*}}$, we call
$(\eta_{\sigma}^{*},\boldsymbol{g}_{\sigma}^{*})$ a Stackelberg
equilibrium of $\gameinstance{\sigma}$.

\section{Main Results}\label{sec:main_results}

The main objective of this section is to determine whether allowing the
joint adversarial-noise distribution to depend on the realization of
$\bU$ provides any additional strategic advantage to the adversary.
We show that every input-dependent adversarial strategy can be
replaced by an input-independent adversarial strategy that induces
exactly the same probability of acceptance and accepted mean-squared
estimation error. Consequently, the two models have the same
achievable performance regions, best-response performance pairs,
optimal thresholds, and equilibrium utilities.

Consider an arbitrary input-dependent adversarial strategy
$\boldsymbol{g}\in
\Lambda_{\AD}^{\modeldep}(\Nadversarial)$. For every
$\underline{\bz}_{\adversaryset}
\in(\mathbb{R}^{\ndim})^{\Nadversarial}$, we define the averaged joint
adversarial-noise PDF as
\begin{align}
    \overline{g}
    \left(
        \underline{\bz}_{\adversaryset}
    \right)
    \triangleq
    \mE_{\bU}
    \left[
        g_{\bU}
        \left(
            \underline{\bz}_{\adversaryset}
        \right)
    \right],
    \label{eq:averaged_adversarial_pdf}
\end{align}
where the expectation is evaluated only with respect to the
distribution of $\bU$. Thus,
$\overline{g}(\underline{\bz}_{\adversaryset})$ is obtained by
averaging the conditional PDFs
$g_{\bu}(\underline{\bz}_{\adversaryset})$ over all possible
realizations of $\bU$. For a continuous ground-truth vector, this
expectation is the usual integral weighted by its PDF, whereas for a
discrete ground-truth vector, it is the corresponding weighted sum.

Using the averaged PDF $\overline{g}$, we define an
input-independent adversarial strategy
$\overline{\boldsymbol{g}}$ by requiring the adversary to use the same
joint noise PDF $\overline{g}$ for every realization of $\bU$. More
precisely, we define
\begin{align}
    \overline{\boldsymbol{g}}
    \triangleq
    \left\{
        \overline{g}_{\bu}
    \right\}_{\bu\in\inputset},
    \qquad
    \overline{g}_{\bu}
    =
    \overline{g},
    \quad
    \forall\bu\in\inputset.
    \label{eq:averaged_input_independent_strategy}
\end{align}
Thus, unlike the original strategy $\boldsymbol{g}$, the strategy
$\overline{\boldsymbol{g}}$ does not change with the realization of
$\bU$.

\begin{theorem}
\label{thm:exact_averaged_strategy_reduction}

For every
$\boldsymbol{g}\in
\Lambda_{\AD}^{\modeldep}(\Nadversarial)$, the function
$\overline{g}$ defined in
\eqref{eq:averaged_adversarial_pdf} is a valid joint adversarial-noise
PDF, and the strategy
$\overline{\boldsymbol{g}}$ defined in
\eqref{eq:averaged_input_independent_strategy} satisfies
\begin{align}
    \overline{\boldsymbol{g}}
    \in
    \Lambda_{\AD}^{\modelind}(\Nadversarial).
    \label{eq:averaged_strategy_is_independent}
\end{align}
Moreover, for every $\eta\in\Lambda_{\DC}$, we have
\begin{align}
    \pa(\boldsymbol{g},\eta)
    =
    \pa(\overline{\boldsymbol{g}},\eta),
    \label{eq:averaged_strategy_same_pa}
\end{align}
and
\begin{align}
    \mse(\boldsymbol{g},\eta)
    =
    \mse(\overline{\boldsymbol{g}},\eta).
    \label{eq:averaged_strategy_same_mse}
\end{align}

\end{theorem}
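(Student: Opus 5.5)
The plan is to show that both the acceptance event and the estimation error depend on the reports only through the noise tuple $(\underline{\bN}_{\honestset},\underline{\bN}_{\adversaryset})$, and never on $\bU$ itself. It then suffices to check that the joint law of this noise tuple is the same under $\boldsymbol{g}$ and $\overline{\boldsymbol{g}}$.

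\textbf{Validity of $\overline{g}$.} First I would check that $\overline{g}$ is a valid joint PDF. Nonnegativity is inherited pointwise from each $g_{\bu}$. By Tonelli's theorem,
\[
\int\overline{g}\,d\underline{\bz}_{\adversaryset}=\mE_{\bU}\Big[\int g_{\bU}\,d\underline{\bz}_{\adversaryset}\Big]=1.
\]
This step needs joint measurability of $(\bu,\underline{\bz}_{\adversaryset})\mapsto g_{\bu}(\underline{\bz}_{\adversaryset})$, which is implicit in \eqref{eq:conditional_adversarial_pdf} being a conditional density. Since the action sets allow Dirac deltas, I would phrase this step for probability measures: $\overline{g}$ is the mixture $\int g_{\bu}\,dP_{\bU}(\bu)$, and a mixture of probability measures through a kernel is again a probability measure. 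Because $\overline{\boldsymbol{g}}$ uses the same $\overline{g}$ for every $\bu$, membership \eqref{eq:averaged_strategy_is_independent} follows directly from \eqref{eq:independent_adversarial_action_set}.

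\textbf{Reduction to the noises.} Next, I would use translation invariance of the coordinatewise max and min. Since $[\bY_i]_r=[\bU]_r+[\bN_i]_r$, we have
\[
R_r(\underline{\bY})=[\bU]_r+\max_i[\bN_i]_r,\qquad L_r(\underline{\bY})=[\bU]_r+\min_i[\bN_i]_r .
\]
Writing $\underline{\bN}$ for the full noise tuple, this gives $\bR(\underline{\bY})-\bL(\underline{\bY})=\bR(\underline{\bN})-\bL(\underline{\bN})$ and $\bU-\widehat{\bU}=-\tfrac12\big(\bR(\underline{\bN})+\bL(\underline{\bN})\big)$. Hence $\mathbbm{1}_{\acceptanceevent{\eta}}=\phi_\eta(\underline{\bN})$ and $\|\bU-\widehat{\bU}\|_2^2\,\mathbbm{1}_{\acceptanceevent{\eta}}=\psi_\eta(\underline{\bN})$ for fixed measurable functions $\phi_\eta,\psi_\eta$.

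\textbf{Same noise law.} Then I would compute the marginal joint density of $\underline{\bN}$ under $\boldsymbol{g}$ by integrating $\bU$ out of \eqref{eq:conditional_honest_adversarial_factorization}:
\[
\mE_{\bU}\big[f_{\underline{\bN}_{\honestset}}(\underline{\bx}_{\honestset})\,g_{\bU}(\underline{\bz}_{\adversaryset})\big]=f_{\underline{\bN}_{\honestset}}(\underline{\bx}_{\honestset})\,\overline{g}(\underline{\bz}_{\adversaryset}).
\]
This is exactly the noise law under $\overline{\boldsymbol{g}}$, where the conditional law is $f_{\underline{\bN}_{\honestset}}\overline{g}$ for every $\bu$ and averaging over $\bU$ changes nothing.

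\textbf{Conclusion.} It follows that
\[
\pa(\boldsymbol{g},\eta)=\mE[\phi_\eta(\underline{\bN})]=\pa(\overline{\boldsymbol{g}},\eta),
\]
which is \eqref{eq:averaged_strategy_same_pa}. Similarly, $\mE[\psi_\eta(\underline{\bN})]$ coincides under the two strategies, and dividing by the common probability of acceptance gives \eqref{eq:averaged_strategy_same_mse}. If the acceptance probability is zero, the conditional MSE is undefined, or defined by the same convention, in both models simultaneously.

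The algebra here is routine. I expect the main obstacle to be the measure-theoretic rigor: justifying the interchange of expectations when the $g_{\bu}$ may be generalized (Dirac) densities, and ensuring measurability in $\bu$. I would handle this by working throughout with the probability kernel $\bu\mapsto g_{\bu}$ and the disintegration of the joint law of $(\bU,\underline{\bN})$.
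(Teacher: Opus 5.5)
Your proposal is correct and follows essentially the same route as the paper: validity of $\overline{g}$ by nonnegativity and Tonelli (Lemma~\ref{lem:averaged_pdf_is_valid}), translation invariance of the coordinatewise extrema so that acceptance and error depend only on the noises (Lemma~\ref{lem:noise_domain_representation}), equality of the joint noise law $f_{\underline{\bN}_{\honestset}}\overline{g}$ under both strategies (Lemma~\ref{lem:joint_noise_distribution_equivalence}), and then equality of $\pa$ and of the unnormalized accepted error, divided by the common positive acceptance probability. Your handling of the zero-acceptance case and your framing of the averaging as a mixture of probability measures (to cover Dirac-type strategies) are small refinements that the paper leaves implicit.
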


Theorem~\ref{thm:exact_averaged_strategy_reduction} states that the
dependence of the adversarial-noise distribution on $\bU$ can be
removed without changing either performance metric. The proof relies
on the facts that the acceptance event and the error of the fixed
coordinatewise midrange estimator depend only on the honest and
adversarial noises. The proof is provided in
Appendix~\ref{app:proof_averaged_strategy_reduction}.

Since the utilities in \eqref{eq:dc_utility} and
\eqref{eq:adversary_utility} depend on the adversarial strategy only
through $\mse$ and $\pa$, Theorem~\ref{thm:exact_averaged_strategy_reduction}
immediately yields the following result.

\begin{corollary}[Utility preservation]
\label{cor:averaged_strategy_utility_preservation}

For every
$\boldsymbol{g}\in
\Lambda_{\AD}^{\modeldep}(\Nadversarial)$ and every
$\eta\in\Lambda_{\DC}$, the corresponding averaged strategy satisfies
\begin{align}
    \sU_{\DC}(\boldsymbol{g},\eta)
    &=
    \sU_{\DC}(\overline{\boldsymbol{g}},\eta),
    \label{eq:averaged_strategy_same_dc_utility}\\
    \sU_{\AD}(\boldsymbol{g},\eta)
    &=
    \sU_{\AD}(\overline{\boldsymbol{g}},\eta).
    \label{eq:averaged_strategy_same_ad_utility}
\end{align}

\end{corollary}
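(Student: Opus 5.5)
The statement to prove is the Corollary, and it follows directly from Theorem~\ref{thm:exact_averaged_strategy_reduction}.

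Fix an arbitrary $\boldsymbol{g}\in\Lambda_{\AD}^{\modeldep}(\Nadversarial)$ and an arbitrary $\eta\in\Lambda_{\DC}$. By Theorem~\ref{thm:exact_averaged_strategy_reduction}, the averaged strategy $\overline{\boldsymbol{g}}$ is a legitimate element of $\Lambda_{\AD}^{\modelind}(\Nadversarial)$. By \eqref{eq:independent_subset_dependent_action_set}, it is therefore also an element of $\Lambda_{\AD}^{\modeldep}(\Nadversarial)$. Hence the quantities $\pa(\overline{\boldsymbol{g}},\eta)$, $\mse(\overline{\boldsymbol{g}},\eta)$ and both utilities are well defined at $\overline{\boldsymbol{g}}$.

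The theorem gives the two identities \eqref{eq:averaged_strategy_same_pa} and \eqref{eq:averaged_strategy_same_mse}. That is, the pair $\big(\mse(\boldsymbol{g},\eta),\pa(\boldsymbol{g},\eta)\big)$ coincides with $\big(\mse(\overline{\boldsymbol{g}},\eta),\pa(\overline{\boldsymbol{g}},\eta)\big)$.

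By \eqref{eq:dc_utility} and \eqref{eq:adversary_utility}, each utility is a fixed, publicly known function ($Q_{\DC}$ or $Q_{\AD}$) evaluated at exactly this pair. The utilities depend on the strategy only through this pair. Substituting the equal pairs gives
\begin{align*}
\sU_{\DC}(\boldsymbol{g},\eta)=Q_{\DC}\big(\mse(\overline{\boldsymbol{g}},\eta),\pa(\overline{\boldsymbol{g}},\eta)\big)=\sU_{\DC}(\overline{\boldsymbol{g}},\eta),
\end{align*}
and the same argument with $Q_{\AD}$ yields \eqref{eq:averaged_strategy_same_ad_utility}. No monotonicity property of $Q_{\DC}$ or $Q_{\AD}$ is needed; only the fact that they are functions is used.

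There is no real obstacle here. The one point to check is that the conditional quantity $\mse$ is well defined, which requires $\pa>0$. Since the two probabilities of acceptance are equal, $\mse(\boldsymbol{g},\eta)$ is defined exactly when $\mse(\overline{\boldsymbol{g}},\eta)$ is. So whatever convention the paper adopts when the probability of acceptance is zero applies identically to both strategies, and the equalities hold in every case.
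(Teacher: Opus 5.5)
Your proposal is correct and is exactly the paper's argument: since $\sU_{\DC}$ and $\sU_{\AD}$ are $Q_{\DC}$ and $Q_{\AD}$ evaluated at the pair $(\mse,\pa)$, the corollary follows immediately from the equalities in Theorem~\ref{thm:exact_averaged_strategy_reduction}. Your remark about the case $\pa(\boldsymbol{g},\eta)=0$ is a sensible addition, because the paper's proof of that theorem establishes the $\mse$ equality only when the common probability of acceptance is positive and leaves the degenerate case implicit.
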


We next formalize the implication of
Theorem~\ref{thm:exact_averaged_strategy_reduction} for the complete
set of achievable performance pairs. For
$\sigma\in\{\modeldep,\modelind\}$ and every
$\eta\in\Lambda_{\DC}$, we define
\begin{align}
\performanceregion{\sigma}(\eta)
\triangleq
\Big\{
    \big(
        \pa(\boldsymbol{g},\eta),
        \mse(\boldsymbol{g},\eta)
    \big)
    \,\Big|\,
    \boldsymbol{g}\in
    \Lambda_{\AD}^{\sigma}(\Nadversarial)
\Big\}.
\label{eq:achievable_performance_region}
\end{align}

\begin{theorem}
\label{thm:performance_region_equivalence}

For every $\eta\in\Lambda_{\DC}$, the input-dependent and
input-independent adversarial models induce the same achievable
performance region. More precisely,
\begin{align}
    \performanceregion{\modeldep}(\eta)
    =
    \performanceregion{\modelind}(\eta),
    \qquad
    \forall\eta\in\Lambda_{\DC}.
    \label{eq:main_performance_region_equality}
\end{align}

\end{theorem}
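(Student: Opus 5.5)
We prove Theorem~\ref{thm:performance_region_equivalence} by double inclusion, using Theorem~\ref{thm:exact_averaged_strategy_reduction} for the nontrivial direction. Throughout, we fix an arbitrary $\eta\in\Lambda_{\DC}$; nothing in the argument depends on the particular value of $\eta$.

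\textbf{Inclusion $\performanceregion{\modelind}(\eta)\subseteq\performanceregion{\modeldep}(\eta)$.} Take any pair $(\pa(\boldsymbol{g},\eta),\mse(\boldsymbol{g},\eta))$ with $\boldsymbol{g}\in\Lambda_{\AD}^{\modelind}(\Nadversarial)$. By \eqref{eq:independent_subset_dependent_action_set}, the same $\boldsymbol{g}$ also belongs to $\Lambda_{\AD}^{\modeldep}(\Nadversarial)$. The quantities $\pa$ and $\mse$ in \eqref{eq:probability_of_acceptance} and \eqref{eq:accepted_mse} are defined for a strategy $\boldsymbol{g}$ alone. They do not refer to which action set $\boldsymbol{g}$ was drawn from. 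Hence the pair is unchanged when $\boldsymbol{g}$ is viewed as a member of $\Lambda_{\AD}^{\modeldep}(\Nadversarial)$, so it lies in $\performanceregion{\modeldep}(\eta)$ by \eqref{eq:achievable_performance_region}.

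\textbf{Inclusion $\performanceregion{\modeldep}(\eta)\subseteq\performanceregion{\modelind}(\eta)$.} Take any pair generated by some $\boldsymbol{g}\in\Lambda_{\AD}^{\modeldep}(\Nadversarial)$, and form the averaged strategy $\overline{\boldsymbol{g}}$ of \eqref{eq:averaged_input_independent_strategy}. By Theorem~\ref{thm:exact_averaged_strategy_reduction}:
\begin{itemize}
\item $\overline{\boldsymbol{g}}\in\Lambda_{\AD}^{\modelind}(\Nadversarial)$, by \eqref{eq:averaged_strategy_is_independent};
\item $\pa(\boldsymbol{g},\eta)=\pa(\overline{\boldsymbol{g}},\eta)$, by \eqref{eq:averaged_strategy_same_pa};
\item $\mse(\boldsymbol{g},\eta)=\mse(\overline{\boldsymbol{g}},\eta)$, by \eqref{eq:averaged_strategy_same_mse}.
\end{itemize}
So the same pair is produced by an input-independent strategy and belongs to $\performanceregion{\modelind}(\eta)$.

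Combining the two inclusions gives \eqref{eq:main_performance_region_equality}, and since $\eta$ was arbitrary, the equality holds for all $\eta\in\Lambda_{\DC}$.

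The only step needing care is definitional. When $\pa(\boldsymbol{g},\eta)=0$, the conditional expectation in \eqref{eq:accepted_mse} is undefined, so we must confirm the region is well defined for every strategy. By \eqref{eq:averaged_strategy_same_pa}, $\pa(\boldsymbol{g},\eta)=0$ if and only if $\pa(\overline{\boldsymbol{g}},\eta)=0$. Therefore whatever convention the paper uses for $\mse$ on null acceptance events applies identically to $\boldsymbol{g}$ and $\overline{\boldsymbol{g}}$, and the two pairs still coincide. All substantive work, namely that the acceptance event and the midrange error depend only on the noises, is already contained in Theorem~\ref{thm:exact_averaged_strategy_reduction}.
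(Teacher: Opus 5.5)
Your proposal is correct and follows the paper's proof: you get $\performanceregion{\modelind}(\eta)\subseteq\performanceregion{\modeldep}(\eta)$ from the action-set containment, and the reverse inclusion by applying Theorem~\ref{thm:exact_averaged_strategy_reduction} to the averaged strategy $\overline{\boldsymbol{g}}$. Your remark on the $\pa=0$ case is a worthwhile addition, since the paper's proof of that theorem establishes $\mse$ equality only when the common acceptance probability is positive.
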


The inclusion
$\performanceregion{\modelind}(\eta)
\subseteq
\performanceregion{\modeldep}(\eta)$
follows from
\eqref{eq:independent_subset_dependent_action_set}. The reverse
inclusion follows by applying
Theorem~\ref{thm:exact_averaged_strategy_reduction} to every
input-dependent adversarial strategy. A complete proof is provided in
Appendix~\ref{app:proof_performance_region_equivalence}.

We finally state the game-theoretic consequence of
Theorem~\ref{thm:performance_region_equivalence}. Since the two
adversarial models induce the same achievable performance region for
every $\eta\in\Lambda_{\DC}$, they induce the same adversarial
best-response utility and the same worst-case utility for the DC at
every threshold.

\begin{theorem}
\label{thm:stackelberg_equivalence}

The input-dependent and input-independent games have the same set of
optimal thresholds for the DC. More precisely,
\begin{align}
&\argmax_{\eta\in\Lambda_{\DC}}
\;
\min_{
    \boldsymbol{g}\in
    \mathcal{B}_{\AD,\modeldep}^{\eta}
}
\sU_{\DC}(\boldsymbol{g},\eta)
=
\argmax_{\eta\in\Lambda_{\DC}}
\;
\min_{
    \boldsymbol{g}\in
    \mathcal{B}_{\AD,\modelind}^{\eta}
}
\sU_{\DC}(\boldsymbol{g},\eta).
\label{eq:optimal_threshold_set_equality}
\end{align}
Consider an arbitrary threshold $\eta^{*}$ belonging to the common set
in \eqref{eq:optimal_threshold_set_equality}, and let
\begin{align}
    \boldsymbol{g}_{\modeldep}^{*}
    &\in
    \overline{\mathcal{B}}_{\AD,\modeldep}^{\eta^{*}},
    \qquad
    \boldsymbol{g}_{\modelind}^{*}
    \in
    \overline{\mathcal{B}}_{\AD,\modelind}^{\eta^{*}}
\label{eq:equilibrium_adversarial_strategies}
\end{align}
be arbitrary corresponding worst-case adversarial best responses.
Then, the equilibrium utilities of the DC and the adversary are
identical in the two games. That is,
\begin{align}
    \sU_{\DC}
    \left(
        \boldsymbol{g}_{\modeldep}^{*},
        \eta^{*}
    \right)
    &=
    \sU_{\DC}
    \left(
        \boldsymbol{g}_{\modelind}^{*},
        \eta^{*}
    \right),
\label{eq:equilibrium_dc_utility_equality}\\
    \sU_{\AD}
    \left(
        \boldsymbol{g}_{\modeldep}^{*},
        \eta^{*}
    \right)
    &=
    \sU_{\AD}
    \left(
        \boldsymbol{g}_{\modelind}^{*},
        \eta^{*}
    \right).
\label{eq:equilibrium_ad_utility_equality}
\end{align}

\end{theorem}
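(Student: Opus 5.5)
The plan is to reduce everything to Theorem~\ref{thm:performance_region_equivalence}, using the fact that both utilities depend on the adversarial strategy only through the pair $(\pa,\mse)$. Fix $\eta\in\Lambda_{\DC}$ and write $\mathcal{R}^{\sigma}(\eta)$ for $\performanceregion{\sigma}(\eta)$. For each $\sigma\in\{\modeldep,\modelind\}$, define the value set
\[
V_{\AD}^{\sigma}(\eta)\triangleq\{Q_{\AD}(m,p):(p,m)\in\mathcal{R}^{\sigma}(\eta)\}.
\]
By \eqref{eq:adversary_utility}, this is exactly the set of adversarial utilities attainable in $\gameinstance{\sigma}$. Theorem~\ref{thm:performance_region_equivalence} gives $\mathcal{R}^{\modeldep}(\eta)=\mathcal{R}^{\modelind}(\eta)$, so $V_{\AD}^{\modeldep}(\eta)=V_{\AD}^{\modelind}(\eta)$. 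Consequently either both best-response sets are empty or both are nonempty, and in the latter case the maximal adversarial utility $u^{*}_{\AD}(\eta)$ is the same in both games.

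Next I will identify the best-response \emph{performance} sets
\[
P^{\sigma}(\eta)\triangleq\{(\pa(\boldsymbol{g},\eta),\mse(\boldsymbol{g},\eta)):\boldsymbol{g}\in\mathcal{B}_{\AD,\sigma}^{\eta}\}.
\]
A strategy lies in $\mathcal{B}_{\AD,\sigma}^{\eta}$ iff its pair lies in $\mathcal{R}^{\sigma}(\eta)$ and attains $u^{*}_{\AD}(\eta)$ under $Q_{\AD}$. Hence $P^{\sigma}(\eta)=\{(p,m)\in\mathcal{R}^{\sigma}(\eta):Q_{\AD}(m,p)=u^{*}_{\AD}(\eta)\}$, and these two sets coincide for $\sigma=\modeldep$ and $\sigma=\modelind$. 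Applying $Q_{\DC}$ then shows that the sets $\{\sU_{\DC}(\boldsymbol{g},\eta):\boldsymbol{g}\in\mathcal{B}_{\AD,\sigma}^{\eta}\}$ coincide. Their minima (or infima) therefore agree, so the DC's objective functions $\eta\mapsto\min_{\boldsymbol{g}\in\mathcal{B}^{\eta}_{\AD,\sigma}}\sU_{\DC}(\boldsymbol{g},\eta)$ are identical functions on $\Lambda_{\DC}$, and \eqref{eq:optimal_threshold_set_equality} follows. Here I will follow the paper's convention that the minima and argmax are attained whenever they are written; existence is not claimed beyond what the definitions presuppose.

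For the equilibrium utilities, fix $\eta^{*}$ in the common set and take $\boldsymbol{g}^{*}_{\sigma}\in\overline{\mathcal{B}}_{\AD,\sigma}^{\eta^{*}}$. Each $\boldsymbol{g}^{*}_{\sigma}$ is a best response, so $\sU_{\AD}(\boldsymbol{g}^{*}_{\sigma},\eta^{*})=u^{*}_{\AD}(\eta^{*})$ for both $\sigma$, which gives \eqref{eq:equilibrium_ad_utility_equality}. Each also minimizes $\sU_{\DC}$ over the best-response set, so $\sU_{\DC}(\boldsymbol{g}^{*}_{\sigma},\eta^{*})$ equals the common minimum established in the previous step, which gives \eqref{eq:equilibrium_dc_utility_equality}.

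I do not expect any real obstacle; the argument is pure bookkeeping. The only point needing care is that best-response sets are defined over strategies rather than performance pairs. One might be tempted to argue by mapping each dependent best response to its averaged version via Corollary~\ref{cor:averaged_strategy_utility_preservation}, which works for one direction, with the reverse direction coming from \eqref{eq:independent_subset_dependent_action_set}. Working at the level of performance pairs handles both directions uniformly and avoids any subtlety about which strategies attain the optimum.
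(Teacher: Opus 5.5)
Your proposal is correct and follows essentially the same route as the paper. Both arguments reduce to Theorem~\ref{thm:performance_region_equivalence}, identify the set of performance pairs induced by best responses with the $Q_{\AD}$-maximizing pairs in the common region, and conclude that the DC's objective coincides pointwise in $\eta$, which gives the equal argmax sets and equal equilibrium utilities. Your remark that the two best-response sets are simultaneously empty or nonempty is a small addition the paper leaves implicit.
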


Theorem~\ref{thm:stackelberg_equivalence} shows that allowing the
adversarial-noise distribution to depend on the realization of $\bU$
does not change the set of thresholds that are optimal for the DC.
Moreover, for any threshold in this common set and any corresponding
worst-case adversarial best responses, both players obtain the same
equilibrium utilities in the two games. The proof is provided in
Appendix~\ref{app:proof_stackelberg_equivalence}.

\appendices

\section{Relationship with Previous Game-of-Coding Models}
\label{app:alignment_previous_goc}

In this appendix, we show that the acceptance and estimation mechanism
defined in
\eqref{eq:coordinatewise_upper_value}--%
\eqref{eq:coordinatewise_midrange_estimator}
provides a common generalization of the mechanisms considered in the
previous game-of-coding papers
\cite{GoCJournal,GoDSybil,nodehi2026game}.
In particular, the scalar multi-node model of \cite{GoDSybil} and the
two-node vector model of \cite{nodehi2026game} are obtained as exact
special cases of the present formulation.

The previous game-of-coding formulations assume that the adversarial
noise distribution is independent of the ground-truth computation.
Under the notation of the present paper, this corresponds to
restricting the conditional adversarial PDFs in
\eqref{eq:conditional_adversarial_pdf} according to
\begin{align}
    g_{\bu}
    \left(
        \underline{\bz}_{\adversaryset}
    \right)
    =
    g
    \left(
        \underline{\bz}_{\adversaryset}
    \right),
    \qquad
    \forall \bu\in\inputset.
    \label{eq:app_input_independent_specialization}
\end{align}
Equivalently, the adversarial strategy is restricted to
$\Lambda_{\AD}^{\modelind}(\Nadversarial)$, which is a subset of
$\Lambda_{\AD}^{\modeldep}(\Nadversarial)$ by
\eqref{eq:independent_subset_dependent_action_set}. Consequently, once
we establish that allowing the adversarial noise distribution to
depend on $\bU$ does not change the achievable probability of
acceptance and accepted mean-squared error in the present model, the
same conclusion follows immediately for the previous 
game-of-coding models after applying the corresponding specializations
below.

\subsection{Scalar Game of Coding}
\label{app:scalar_previous_models}

We first specialize the present formulation to scalar-valued
computations by setting $\ndim=1$. We identify every one-dimensional
vector with its unique scalar coordinate and write
\begin{align}
    \bU=(U),
    \qquad
    \bY_i=(Y_i),
    \qquad
    \bN_i=(N_i),
    \qquad
    i\in\allnodesset.
    \label{eq:app_scalar_identification}
\end{align}
It follows from the definition of the coordinatewise extrema in
\eqref{eq:coordinatewise_upper_value} that
\begin{align}
    L_1(\underline{\bY})
    &=
    \min_{i\in\allnodesset}Y_i,
    \label{eq:app_scalar_lower_value}\\
    R_1(\underline{\bY})
    &=
    \max_{i\in\allnodesset}Y_i.
    \label{eq:app_scalar_upper_value}
\end{align}
Therefore, the acceptance statistic in
\eqref{eq:coordinatewise_acceptance_event} satisfies
\begin{align}
&\left\|
    \bR(\underline{\bY})
    -
    \bL(\underline{\bY})
\right\|_2
\nonumber\\
&\qquad
\overset{(a)}{=}
\left|
    R_1(\underline{\bY})
    -
    L_1(\underline{\bY})
\right|
\overset{(b)}{=}
R_1(\underline{\bY})
-
L_1(\underline{\bY})
\nonumber\\
&\qquad
\overset{(c)}{=}
\max_{i\in\allnodesset}Y_i
-
\min_{i\in\allnodesset}Y_i,
\label{eq:app_scalar_acceptance_statistic}
\end{align}
where $(a)$ follows because the Euclidean norm of a
one-dimensional vector is the absolute value of its unique
coordinate, $(b)$ follows from
$R_1(\underline{\bY})\geq L_1(\underline{\bY})$, and $(c)$ follows
from \eqref{eq:app_scalar_lower_value} and
\eqref{eq:app_scalar_upper_value}. Consequently, the acceptance event
reduces to
\begin{align}
    \acceptanceevent{\eta}
    =
    \left\{
        \max_{i\in\allnodesset}Y_i
        -
        \min_{i\in\allnodesset}Y_i
        \leq
        \eta\Delta
    \right\}.
    \label{eq:app_scalar_acceptance_event}
\end{align}
Equation \eqref{eq:app_scalar_acceptance_event} is exactly the
max--min acceptance rule employed in the scalar multi-node
Sybil-resistant game of coding in \cite{GoDSybil}.

Similarly, the estimator in
\eqref{eq:coordinatewise_midrange_estimator} satisfies
\begin{align}
    \widehat U
    &\overset{(a)}{=}
    \left[
        \widehat{\bU}
    \right]_1
    \overset{(b)}{=}
    \frac{
        R_1(\underline{\bY})
        +
        L_1(\underline{\bY})
    }{2}
    \nonumber\\
    &\overset{(c)}{=}
    \frac{
        \max_{i\in\allnodesset}Y_i
        +
        \min_{i\in\allnodesset}Y_i
    }{2},
    \label{eq:app_scalar_midrange_estimator}
\end{align}
where $(a)$ follows from the scalar identification in
\eqref{eq:app_scalar_identification}, $(b)$ follows from
\eqref{eq:coordinatewise_midrange_estimator}, and $(c)$ follows from
\eqref{eq:app_scalar_lower_value} and
\eqref{eq:app_scalar_upper_value}. Thus, the fixed estimator of the
present paper reduces exactly to the scalar midrange estimator used in
\cite{GoDSybil}.

\subsection{Two-Node Vector-Valued Game of Coding}
\label{app:vector_previous_model}

We now specialize the present formulation to $\Nnodes=2$, while
allowing an arbitrary dimension $\ndim\geq1$. Consider a deterministic
pair of reports
$\underline{\by}=(\by_1,\by_2)$, where
$\by_1,\by_2\in\mathbb{R}^{\ndim}$. For every coordinate
$r\in[\ndim]$, the definitions in
\eqref{eq:coordinatewise_upper_value} yield
\begin{align}
&R_r(\underline{\by})-L_r(\underline{\by})
\nonumber\\
&\qquad
\overset{(a)}{=}
\max\left\{
    [\by_1]_r,[\by_2]_r
\right\}
-
\min\left\{
    [\by_1]_r,[\by_2]_r
\right\}
\nonumber\\
&\qquad
\overset{(b)}{=}
\left|
    [\by_1]_r-[\by_2]_r
\right|,
\label{eq:app_vector_coordinatewise_difference}
\end{align}
where $(a)$ follows from the definitions of the coordinatewise lower
and upper values, and $(b)$ follows from the identity
$\max\{x_1,x_2\}-\min\{x_1,x_2\}=|x_1-x_2|$ for any
$x_1,x_2\in\mathbb{R}$.

Using \eqref{eq:app_vector_coordinatewise_difference}, we obtain
\begin{align}
&\left\|
    \bR(\underline{\by})-\bL(\underline{\by})
\right\|_2^2
\nonumber\\
&\qquad
\overset{(a)}{=}
\sum_{r=1}^{\ndim}
\left(
    R_r(\underline{\by})-L_r(\underline{\by})
\right)^2
\nonumber\\
&\qquad
\overset{(b)}{=}
\sum_{r=1}^{\ndim}
\left|
    [\by_1]_r-[\by_2]_r
\right|^2
\nonumber\\
&\qquad
\overset{(c)}{=}
\left\|
    \by_1-\by_2
\right\|_2^2,
\label{eq:app_vector_distance_squared}
\end{align}
where $(a)$ follows from
\eqref{eq:coordinatewise_lower_vector} and
\eqref{eq:coordinatewise_upper_vector}, $(b)$ follows from
\eqref{eq:app_vector_coordinatewise_difference}, and $(c)$ follows
from the definition of the Euclidean norm. Taking the square root of
both sides of \eqref{eq:app_vector_distance_squared} gives
\begin{align}
    \left\|
        \bR(\underline{\by})-\bL(\underline{\by})
    \right\|_2
    =
    \left\|
        \by_1-\by_2
    \right\|_2.
\label{eq:app_vector_distance}
\end{align}
Therefore, when $\Nnodes=2$, the acceptance event in
\eqref{eq:coordinatewise_acceptance_event} reduces to
\begin{align}
    \acceptanceevent{\eta}
    =
    \left\{
        \left\|
            \bY_1-\bY_2
        \right\|_2
        \leq
        \eta\Delta
    \right\},
\label{eq:app_vector_two_node_acceptance}
\end{align}
which is exactly the acceptance rule employed in the two-node
vector-valued game of coding in \cite{nodehi2026game}.

We next consider the estimation rule. For every
$r\in[\ndim]$, we have
\begin{align}
&\frac{
    R_r(\underline{\by})+L_r(\underline{\by})
}{2}
\nonumber\\
&\qquad
\overset{(a)}{=}
\frac{
    \max\left\{
        [\by_1]_r,[\by_2]_r
    \right\}
    +
    \min\left\{
        [\by_1]_r,[\by_2]_r
    \right\}
}{2}
\nonumber\\
&\qquad
\overset{(b)}{=}
\frac{
    [\by_1]_r+[\by_2]_r
}{2},
\label{eq:app_vector_coordinatewise_midpoint}
\end{align}
where $(a)$ follows from the definitions of the coordinatewise lower
and upper values, and $(b)$ follows from the identity
$\max\{x_1,x_2\}+\min\{x_1,x_2\}=x_1+x_2$.

Since \eqref{eq:app_vector_coordinatewise_midpoint} holds for every
coordinate $r\in[\ndim]$, it follows from
\eqref{eq:coordinatewise_lower_vector} and
\eqref{eq:coordinatewise_upper_vector} that
\begin{align}
    \frac{
        \bR(\underline{\by})+\bL(\underline{\by})
    }{2}
    =
    \frac{
        \by_1+\by_2
    }{2}.
\label{eq:app_vector_midpoint}
\end{align}
Consequently, when $\Nnodes=2$, the estimator in
\eqref{eq:coordinatewise_midrange_estimator} reduces to
\begin{align}
    \widehat{\bU}
    =
    \frac{
        \bY_1+\bY_2
    }{2},
\label{eq:app_vector_two_node_estimator}
\end{align}
which is exactly the midpoint estimator employed in
\cite{nodehi2026game}. Hence, the acceptance and estimation rules of
the two-node vector-valued game of coding are exact special cases of
the present formulation.

\section{Proof of Theorem~\ref{thm:exact_averaged_strategy_reduction}}
\label{app:proof_averaged_strategy_reduction}

In this appendix, we prove
Theorem~\ref{thm:exact_averaged_strategy_reduction}. The central idea
of the proof is that, although the conditional adversarial-noise PDF
$g_{\bu}$ may depend on the realization $\bU=\bu$, the acceptance
event and the estimation error depend only on the realized honest and
adversarial noises. After averaging $g_{\bu}$ over the distribution of
$\bU$, the resulting input-independent strategy induces exactly the
same joint distribution of all noise variables.

The proof proceeds in three steps. In
Lemma~\ref{lem:averaged_pdf_is_valid}, we show that the averaged
function $\overline{g}$ defined in
\eqref{eq:averaged_adversarial_pdf} is a valid joint PDF and induces an
input-independent adversarial strategy. In
Lemma~\ref{lem:joint_noise_distribution_equivalence}, we show that the
input-dependent strategy $\boldsymbol{g}$ and its averaged
input-independent counterpart
$\overline{\boldsymbol{g}}$ induce the same joint distribution of the
honest and adversarial noises. In
Lemma~\ref{lem:noise_domain_representation}, we show that both the
acceptance event and the estimation error can be written entirely in
terms of these noises. We then combine these results to prove
Theorem~\ref{thm:exact_averaged_strategy_reduction}.

\begin{lemma}
\label{lem:averaged_pdf_is_valid}

For every
$\boldsymbol{g}\in
\Lambda_{\AD}^{\modeldep}(\Nadversarial)$, the function
$\overline{g}$ defined in
\eqref{eq:averaged_adversarial_pdf} is a valid joint PDF on
$\big(\mathbb{R}^{\ndim}\big)^{\Nadversarial}$. Moreover, the strategy
$\overline{\boldsymbol{g}}$ defined in
\eqref{eq:averaged_input_independent_strategy} satisfies
\begin{align}
    \overline{\boldsymbol{g}}
    \in
    \Lambda_{\AD}^{\modelind}(\Nadversarial).
    \label{eq:app_averaged_strategy_membership}
\end{align}

\end{lemma}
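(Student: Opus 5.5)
The plan is to verify directly the two defining properties of a joint PDF for $\overline{g}$, namely nonnegativity and unit total mass, and then read off membership in $\Lambda_{\AD}^{\modelind}(\Nadversarial)$ from the definition in \eqref{eq:independent_adversarial_action_set}. Throughout, I will let $P_{\bU}$ denote the distribution of $\bU$, so that \eqref{eq:averaged_adversarial_pdf} reads $\overline{g}(\underline{\bz}_{\adversaryset})=\int_{\inputset} g_{\bu}(\underline{\bz}_{\adversaryset})\,dP_{\bU}(\bu)$. This single form covers both the continuous (PDF-weighted integral) and discrete (weighted sum) cases mentioned after \eqref{eq:averaged_adversarial_pdf}.

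\textbf{Nonnegativity.} Since $\boldsymbol{g}\in\Lambda_{\AD}^{\modeldep}(\Nadversarial)$, each $g_{\bu}$ is a valid joint PDF, so $g_{\bu}(\underline{\bz}_{\adversaryset})\geq 0$ for every $\bu\in\inputset$ and every $\underline{\bz}_{\adversaryset}$. An expectation of a nonnegative quantity is nonnegative, hence $\overline{g}\geq 0$ pointwise. This also shows that $\overline{g}$ is a well-defined element of $[0,\infty]$.

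\textbf{Normalization.} I would integrate $\overline{g}$ over $(\mathbb{R}^{\ndim})^{\Nadversarial}$ and exchange the order of integration by Tonelli's theorem, which applies because the integrand is nonnegative. This gives
\begin{align}
\int \overline{g}(\underline{\bz}_{\adversaryset})\,d\underline{\bz}_{\adversaryset}
=
\int_{\inputset}\left(\int g_{\bu}(\underline{\bz}_{\adversaryset})\,d\underline{\bz}_{\adversaryset}\right)dP_{\bU}(\bu)
=
\int_{\inputset}1\,dP_{\bU}(\bu)=1 .
\nonumber
\end{align}
In particular, $\overline{g}$ is finite almost everywhere and is a valid joint PDF.

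\textbf{Main obstacle: measurability and Dirac components.} The one genuine technical point is measurability. The joint map $(\bu,\underline{\bz}_{\adversaryset})\mapsto g_{\bu}(\underline{\bz}_{\adversaryset})$ must be measurable for Tonelli's theorem to apply. This holds because $g_{\bu}$ is by definition the conditional density $f_{\underline{\bN}_{\adversaryset}\mid\bU}$ in \eqref{eq:conditional_adversarial_pdf}, that is, a regular conditional distribution (a Markov kernel).

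When $g_{\bu}$ contains Dirac components, I would phrase the same argument at the level of measures. Define $\overline{\mu}(B)\triangleq\int_{\inputset}\mu_{\bu}(B)\,dP_{\bU}(\bu)$, where $\mu_{\bu}$ is the law with generalized density $g_{\bu}$. This $\overline{\mu}$ is exactly the marginal law of $\underline{\bN}_{\adversaryset}$. Countable additivity of $\overline{\mu}$ follows from monotone convergence, and $\overline{\mu}(\mathbb{R}^{\ndim\Nadversarial})=1$. Hence $\overline{g}$ is a valid generalized PDF in the same sense that the action set \eqref{eq:dependent_adversarial_action_set} admits Dirac deltas.

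\textbf{Membership.} By \eqref{eq:averaged_input_independent_strategy}, we have $\overline{g}_{\bu}=\overline{g}$ for all $\bu\in\inputset$, and each $\overline{g}_{\bu}$ is a valid PDF. Therefore $\overline{\boldsymbol{g}}\in\Lambda_{\AD}^{\modeldep}(\Nadversarial)$. Choosing $g=\overline{g}$ in \eqref{eq:independent_adversarial_action_set} then gives \eqref{eq:app_averaged_strategy_membership}.
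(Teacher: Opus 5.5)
Your proposal is correct and follows the paper's proof. Nonnegativity comes from the pointwise nonnegativity of each $g_{\bu}$, normalization comes from exchanging the expectation over $\bU$ with the integral (justified by nonnegativity, i.e., Tonelli), and membership follows by taking $g=\overline{g}$ in the definition of $\Lambda_{\AD}^{\modelind}(\Nadversarial)$. Your remarks on joint measurability and on handling Dirac components through the mixture measure $\overline{\mu}$ are points the paper leaves implicit; they add rigor without changing the argument.
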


\begin{proof}

Since
$\boldsymbol{g}\in
\Lambda_{\AD}^{\modeldep}(\Nadversarial)$, for every
$\bu\in\inputset$ and every
$\underline{\bz}_{\adversaryset}
\in
\big(\mathbb{R}^{\ndim}\big)^{\Nadversarial}$, we have
\begin{align}
    g_{\bu}
    \left(
        \underline{\bz}_{\adversaryset}
    \right)
    \geq
    0.
    \label{eq:app_conditional_pdf_nonnegative}
\end{align}
Therefore, for every
$\underline{\bz}_{\adversaryset}
\in
\big(\mathbb{R}^{\ndim}\big)^{\Nadversarial}$,
\begin{align}
    \overline{g}
    \left(
        \underline{\bz}_{\adversaryset}
    \right)
    &\overset{(a)}{=}
    \mE_{\bU}
    \left[
        g_{\bU}
        \left(
            \underline{\bz}_{\adversaryset}
        \right)
    \right]
    \overset{(b)}{\geq}
    0,
    \label{eq:app_averaged_pdf_nonnegative}
\end{align}
where $(a)$ follows from
\eqref{eq:averaged_adversarial_pdf}, and $(b)$ follows from
\eqref{eq:app_conditional_pdf_nonnegative}.

We next verify that $\overline{g}$ integrates to one. Since
$g_{\bu}$ is a valid joint PDF for every $\bu\in\inputset$, we have
\begin{align}
    \int_{\left(\mathbb{R}^{\ndim}\right)^{\Nadversarial}}
    g_{\bu}
    \left(
        \underline{\bz}_{\adversaryset}
    \right)
    d\underline{\bz}_{\adversaryset}
    =
    1,
    \qquad
    \forall\bu\in\inputset.
    \label{eq:app_conditional_pdf_normalization}
\end{align}
Consequently,
\begin{align}
&\int_{\left(\mathbb{R}^{\ndim}\right)^{\Nadversarial}}
\overline{g}
\left(
    \underline{\bz}_{\adversaryset}
\right)
d\underline{\bz}_{\adversaryset}
\nonumber\\
&\qquad
\overset{(a)}{=}
\int_{\left(\mathbb{R}^{\ndim}\right)^{\Nadversarial}}
\mE_{\bU}
\left[
    g_{\bU}
    \left(
        \underline{\bz}_{\adversaryset}
    \right)
\right]
d\underline{\bz}_{\adversaryset}
\nonumber\\
&\qquad
\overset{(b)}{=}
\mE_{\bU}
\left[
    \int_{\left(\mathbb{R}^{\ndim}\right)^{\Nadversarial}}
    g_{\bU}
    \left(
        \underline{\bz}_{\adversaryset}
    \right)
    d\underline{\bz}_{\adversaryset}
\right]
\nonumber\\
&\qquad
\overset{(c)}{=}
\mE_{\bU}[1]
\overset{(d)}{=}
1,
\label{eq:app_averaged_pdf_normalization}
\end{align}
where $(a)$ follows from
\eqref{eq:averaged_adversarial_pdf}, $(b)$ follows by exchanging the
expectation and the integral, which is valid because the integrand is
nonnegative, $(c)$ follows from
\eqref{eq:app_conditional_pdf_normalization}, and $(d)$ follows because
the expectation of the constant one is one. Equations \eqref{eq:app_averaged_pdf_nonnegative} and
\eqref{eq:app_averaged_pdf_normalization} show that
$\overline{g}$ is a valid joint PDF on
$\big(\mathbb{R}^{\ndim}\big)^{\Nadversarial}$.

By
\eqref{eq:averaged_input_independent_strategy}, the strategy
$\overline{\boldsymbol{g}}$ uses the same joint PDF
$\overline{g}$ for every realization $\bu\in\inputset$. Therefore,
\begin{align}
    \overline{g}_{\bu}
    =
    \overline{g},
    \qquad
    \forall\bu\in\inputset.
    \label{eq:app_averaged_strategy_does_not_depend_on_u}
\end{align}
It follows from
\eqref{eq:independent_adversarial_action_set},
\eqref{eq:app_averaged_pdf_normalization}, and
\eqref{eq:app_averaged_strategy_does_not_depend_on_u} that
\begin{align}
    \overline{\boldsymbol{g}}
    \in
    \Lambda_{\AD}^{\modelind}(\Nadversarial).
\end{align}
This proves the lemma.

\end{proof}

\begin{lemma}
\label{lem:joint_noise_distribution_equivalence}

Let
$\boldsymbol{g}\in
\Lambda_{\AD}^{\modeldep}(\Nadversarial)$ and let
$\overline{\boldsymbol{g}}$ be the corresponding averaged
input-independent strategy defined in
\eqref{eq:averaged_input_independent_strategy}. Then the two strategies
induce the same marginal joint PDF for the honest- and
adversarial-noise vectors. More precisely,
\begin{align}
&f_{\underline{\bN}_{\honestset},
\underline{\bN}_{\adversaryset}}^{\boldsymbol{g}}
\left(
    \underline{\bx}_{\honestset},
    \underline{\bz}_{\adversaryset}
\right)
\nonumber\\
&\qquad =
f_{\underline{\bN}_{\honestset},
\underline{\bN}_{\adversaryset}}^{\overline{\boldsymbol{g}}}
\left(
    \underline{\bx}_{\honestset},
    \underline{\bz}_{\adversaryset}
\right)
\nonumber\\
&\qquad =
f_{\underline{\bN}_{\honestset}}
\left(
    \underline{\bx}_{\honestset}
\right)
\overline{g}
\left(
    \underline{\bz}_{\adversaryset}
\right).
\label{eq:app_joint_noise_pdf_equivalence}
\end{align}

\end{lemma}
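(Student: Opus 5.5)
The plan is to compute the unconditional joint PDF of $(\underline{\bN}_{\honestset},\underline{\bN}_{\adversaryset})$ under each strategy by marginalizing the conditional PDF over $\bU$, and then compare the two results directly.

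For the input-dependent strategy $\boldsymbol{g}$, the law of total probability applies to densities. Averaging the conditional joint PDF over the distribution of $\bU$ gives
\begin{align}
f_{\underline{\bN}_{\honestset},\underline{\bN}_{\adversaryset}}^{\boldsymbol{g}}
\left(\underline{\bx}_{\honestset},\underline{\bz}_{\adversaryset}\right)
=
\mE_{\bU}\left[
f_{\underline{\bN}_{\honestset},\underline{\bN}_{\adversaryset}\mid\bU}
\left(\underline{\bx}_{\honestset},\underline{\bz}_{\adversaryset}\mid\bU\right)
\right].
\nonumber
\end{align}
We then substitute the factorization \eqref{eq:conditional_honest_adversarial_factorization}. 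The honest factor $f_{\underline{\bN}_{\honestset}}(\underline{\bx}_{\honestset})$ does not depend on $\bU$, so it can be pulled outside the expectation. What remains is $f_{\underline{\bN}_{\honestset}}(\underline{\bx}_{\honestset})\,\mE_{\bU}[g_{\bU}(\underline{\bz}_{\adversaryset})]$, which by \eqref{eq:averaged_adversarial_pdf} equals $f_{\underline{\bN}_{\honestset}}(\underline{\bx}_{\honestset})\,\overline{g}(\underline{\bz}_{\adversaryset})$.

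For $\overline{\boldsymbol{g}}$ the computation is the same, and simpler. The factorization \eqref{eq:conditional_honest_adversarial_factorization} is part of the model for every strategy, so it holds with $g_{\bu}$ replaced by $\overline{g}_{\bu}=\overline{g}$, which is a valid PDF by Lemma~\ref{lem:averaged_pdf_is_valid}. The conditional joint PDF is therefore $f_{\underline{\bN}_{\honestset}}(\underline{\bx}_{\honestset})\,\overline{g}(\underline{\bz}_{\adversaryset})$ for every $\bu$. Averaging a quantity that does not depend on $\bu$ over $\bU$ returns the same quantity, and this gives the second equality in \eqref{eq:app_joint_noise_pdf_equivalence}.

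The only delicate point is justifying the marginalization when $\bU$ is a general distribution (continuous, discrete, or mixed) and the $g_{\bu}$ may contain Dirac deltas. I would handle this measure-theoretically, working with probabilities of Borel sets $B$ in the noise space rather than pointwise densities. Write $\Pr((\underline{\bN}_{\honestset},\underline{\bN}_{\adversaryset})\in B)$ as $\mE_{\bU}$ of the conditional probability, expand the conditional probability as an integral of the product density, and exchange $\mE_{\bU}$ with the integral over $B$ by Tonelli, since the integrand is nonnegative. This is the same exchange used in Lemma~\ref{lem:averaged_pdf_is_valid}. The result is that both strategies assign $\int_B f_{\underline{\bN}_{\honestset}}\overline{g}$ to every $B$, so they induce the same joint law, which is the equality of (generalized) densities claimed in \eqref{eq:app_joint_noise_pdf_equivalence}.
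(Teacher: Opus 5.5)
Your proposal is correct and follows the paper's proof essentially step for step: marginalize the conditional PDF over $\bU$, apply the factorization \eqref{eq:conditional_honest_adversarial_factorization}, pull out the $\bU$-independent honest factor, and recognize $\overline{g}$, with the $\overline{\boldsymbol{g}}$ case handled the same way. Your extra Borel-set and Tonelli formulation makes rigorous the marginalization step (including mixed $\bU$ and Dirac components) that the paper states directly at the density level, but it does not change the route.
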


\begin{proof}

Since the honest noises are independent and identically distributed
according to $f_{\bN}$, their joint PDF is
\begin{align}
    f_{\underline{\bN}_{\honestset}}
    \left(
        \underline{\bx}_{\honestset}
    \right)
    =
    \prod_{k\in\honestset}
    f_{\bN}(\bx_k).
    \label{eq:app_joint_honest_noise_pdf}
\end{align}

We first derive the marginal joint PDF of the honest and adversarial
noises under the input-dependent strategy $\boldsymbol{g}$. By the
law of total probability,
\begin{align}
&f_{\underline{\bN}_{\honestset},
\underline{\bN}_{\adversaryset}}^{\boldsymbol{g}}
\left(
    \underline{\bx}_{\honestset},
    \underline{\bz}_{\adversaryset}
\right)
\nonumber\\
&\qquad
\overset{(a)}{=}
\mE_{\bU}
\left[
f_{\underline{\bN}_{\honestset},
\underline{\bN}_{\adversaryset}\mid\bU}^{\boldsymbol{g}}
\left(
    \underline{\bx}_{\honestset},
    \underline{\bz}_{\adversaryset}
    \mid\bU
\right)
\right]
\nonumber\\
&\qquad
\overset{(b)}{=}
\mE_{\bU}
\left[
f_{\underline{\bN}_{\honestset}}
\left(
    \underline{\bx}_{\honestset}
\right)
g_{\bU}
\left(
    \underline{\bz}_{\adversaryset}
\right)
\right]
\nonumber\\
&\qquad
\overset{(c)}{=}
f_{\underline{\bN}_{\honestset}}
\left(
    \underline{\bx}_{\honestset}
\right)
\mE_{\bU}
\left[
g_{\bU}
\left(
    \underline{\bz}_{\adversaryset}
\right)
\right]
\nonumber\\
&\qquad
\overset{(d)}{=}
f_{\underline{\bN}_{\honestset}}
\left(
    \underline{\bx}_{\honestset}
\right)
\overline{g}
\left(
    \underline{\bz}_{\adversaryset}
\right),
\label{eq:app_dependent_joint_noise_pdf}
\end{align}
where $(a)$ follows from marginalizing over the randomness of $\bU$,
$(b)$ follows from
\eqref{eq:conditional_honest_adversarial_factorization}, $(c)$ follows
because
$f_{\underline{\bN}_{\honestset}}
(\underline{\bx}_{\honestset})$
does not depend on $\bU$, and $(d)$ follows from
\eqref{eq:averaged_adversarial_pdf}.

We next derive the corresponding marginal joint PDF under
$\overline{\boldsymbol{g}}$. By
\eqref{eq:averaged_input_independent_strategy}, the conditional
adversarial-noise PDF is $\overline{g}$ for every realization of
$\bU$. Therefore,
\begin{align}
&f_{\underline{\bN}_{\honestset},
\underline{\bN}_{\adversaryset}}^{\overline{\boldsymbol{g}}}
\left(
    \underline{\bx}_{\honestset},
    \underline{\bz}_{\adversaryset}
\right)
\nonumber\\
&\qquad
\overset{(a)}{=}
\mE_{\bU}
\left[
f_{\underline{\bN}_{\honestset},
\underline{\bN}_{\adversaryset}\mid\bU}
^{\overline{\boldsymbol{g}}}
\left(
    \underline{\bx}_{\honestset},
    \underline{\bz}_{\adversaryset}
    \mid\bU
\right)
\right]
\nonumber\\
&\qquad
\overset{(b)}{=}
\mE_{\bU}
\left[
f_{\underline{\bN}_{\honestset}}
\left(
    \underline{\bx}_{\honestset}
\right)
\overline{g}
\left(
    \underline{\bz}_{\adversaryset}
\right)
\right]
\nonumber\\
&\qquad
\overset{(c)}{=}
f_{\underline{\bN}_{\honestset}}
\left(
    \underline{\bx}_{\honestset}
\right)
\overline{g}
\left(
    \underline{\bz}_{\adversaryset}
\right),
\label{eq:app_independent_joint_noise_pdf}
\end{align}
where $(a)$ follows from marginalizing over $\bU$, $(b)$ follows from
the conditional independence assumption together with
\eqref{eq:averaged_input_independent_strategy}, and $(c)$ follows
because both factors inside the expectation are independent of
$\bU$. Combining \eqref{eq:app_dependent_joint_noise_pdf} and
\eqref{eq:app_independent_joint_noise_pdf} gives
\eqref{eq:app_joint_noise_pdf_equivalence}. This proves the lemma.

\end{proof}

\begin{lemma}
\label{lem:noise_domain_representation}

Let
\begin{align}
    \underline{\bN}
    \triangleq
    \left(
        \bN_1,\ldots,\bN_{\Nnodes}
    \right)
    \label{eq:app_complete_noise_tuple}
\end{align}
denote the complete tuple of honest and adversarial noises. The
acceptance event and the estimation error satisfy
\begin{align}
    \acceptanceevent{\eta}
    &=
    \left\{
        \left\|
            \bR(\underline{\bN})
            -
            \bL(\underline{\bN})
        \right\|_2
        \leq
        \eta\Delta
    \right\},
    \label{eq:app_acceptance_noise_domain}\\
    \left\|
        \bU-\widehat{\bU}
    \right\|_2^2
    &=
    \left\|
        \frac{
            \bR(\underline{\bN})
            +
            \bL(\underline{\bN})
        }{2}
    \right\|_2^2.
    \label{eq:app_error_noise_domain}
\end{align}
Therefore, both quantities depend on the noises but not directly on
the realization of $\bU$.

\end{lemma}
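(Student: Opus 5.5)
The plan is to exploit translation equivariance of the coordinatewise extrema. By \eqref{eq:honest_report} and \eqref{eq:adversarial_report}, every report has the form $\bY_i=\bU+\bN_i$ for all $i\in\allnodesset$. Hence, for each coordinate $r\in[\ndim]$,
\begin{align}
    [\bY_i]_r=[\bU]_r+[\bN_i]_r .
\end{align}
The first step is to prove the elementary identities
\begin{align}
    \min_{i\in\allnodesset}\left(a+b_i\right)=a+\min_{i\in\allnodesset}b_i,
    \qquad
    \max_{i\in\allnodesset}\left(a+b_i\right)=a+\max_{i\in\allnodesset}b_i ,
\end{align}
valid for any real $a$ and reals $b_i$. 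Both follow because $x\mapsto a+x$ is strictly increasing. As a consequence, an index attaining the minimum or maximum of $\{b_i\}$ also attains it for $\{a+b_i\}$, and conversely.

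Applying these identities with $a=[\bU]_r$ and $b_i=[\bN_i]_r$ gives
\begin{align}
    L_r(\underline{\bY})=[\bU]_r+L_r(\underline{\bN}),
    \qquad
    R_r(\underline{\bY})=[\bU]_r+R_r(\underline{\bN}).
\end{align}
Stacking over $r$ using \eqref{eq:coordinatewise_lower_vector} and \eqref{eq:coordinatewise_upper_vector} yields the vector identities
\begin{align}
    \bL(\underline{\bY})=\bU+\bL(\underline{\bN}),
    \qquad
    \bR(\underline{\bY})=\bU+\bR(\underline{\bN}).
\end{align}

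For \eqref{eq:app_acceptance_noise_domain}, subtract the two vector identities. The common term $\bU$ cancels, so $\bR(\underline{\bY})-\bL(\underline{\bY})=\bR(\underline{\bN})-\bL(\underline{\bN})$. Substituting this into \eqref{eq:coordinatewise_acceptance_event} gives the claimed event equality. The equality holds pointwise for every realization, so the two events are identical as sets.

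For \eqref{eq:app_error_noise_domain}, insert the vector identities into \eqref{eq:coordinatewise_midrange_estimator}:
\begin{align}
    \widehat{\bU}=\bU+\frac{\bR(\underline{\bN})+\bL(\underline{\bN})}{2}.
\end{align}
Therefore $\bU-\widehat{\bU}=-\tfrac12\big(\bR(\underline{\bN})+\bL(\underline{\bN})\big)$. Taking the squared Euclidean norm removes the sign and gives the stated expression.

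There is no real obstacle here. The only point requiring care is that the coordinatewise extrema commute with adding a common shift. This holds coordinate by coordinate even though different nodes may attain the extrema in different coordinates, so nothing needs to be checked about which node is extremal.
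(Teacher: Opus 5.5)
Your proposal is correct and follows the paper's proof essentially step for step. Both arguments use the coordinatewise shift identities for $L_r$ and $R_r$, stack them into $\bL(\underline{\bY})=\bU+\bL(\underline{\bN})$ and $\bR(\underline{\bY})=\bU+\bR(\underline{\bN})$, cancel $\bU$ in the range, and drop the sign under the squared norm for the midrange error.
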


\begin{proof}

For every node $i\in\allnodesset$ and every coordinate
$r\in[\ndim]$, it follows from
\eqref{eq:honest_report} and \eqref{eq:adversarial_report} that
\begin{align}
    [\bY_i]_r
    =
    [\bU]_r+[\bN_i]_r.
    \label{eq:app_report_coordinate}
\end{align}
Using \eqref{eq:app_report_coordinate}, the coordinatewise lower value
satisfies
\begin{align}
    L_r(\underline{\bY})
    &\overset{(a)}{=}
    \min_{i\in\allnodesset}
    \left\{
        [\bU]_r+[\bN_i]_r
    \right\}
    \nonumber\\
    &\overset{(b)}{=}
    [\bU]_r
    +
    \min_{i\in\allnodesset}
    [\bN_i]_r
    \nonumber\\
    &\overset{(c)}{=}
    [\bU]_r
    +
    L_r(\underline{\bN}),
    \label{eq:app_lower_translation}
\end{align}
where $(a)$ follows from
\eqref{eq:coordinatewise_upper_value} and
\eqref{eq:app_report_coordinate}, $(b)$ follows because
$[\bU]_r$ is common to all reports, and $(c)$ follows from the
definition of the coordinatewise lower value applied to the noise
tuple.

Similarly, the coordinatewise upper value satisfies
\begin{align}
    R_r(\underline{\bY})
    &\overset{(a)}{=}
    \max_{i\in\allnodesset}
    \left\{
        [\bU]_r+[\bN_i]_r
    \right\}
    \nonumber\\
    &\overset{(b)}{=}
    [\bU]_r
    +
    \max_{i\in\allnodesset}
    [\bN_i]_r
    \nonumber\\
    &\overset{(c)}{=}
    [\bU]_r
    +
    R_r(\underline{\bN}),
    \label{eq:app_upper_translation}
\end{align}
where $(a)$ follows from
\eqref{eq:coordinatewise_upper_value} and
\eqref{eq:app_report_coordinate}, $(b)$ follows because
$[\bU]_r$ is common to all reports, and $(c)$ follows from the
definition of the coordinatewise upper value applied to the noise
tuple.

Since \eqref{eq:app_lower_translation} and
\eqref{eq:app_upper_translation} hold for every
$r\in[\ndim]$, we obtain
\begin{align}
    \bL(\underline{\bY})
    &=
    \bU+\bL(\underline{\bN}),
    \label{eq:app_lower_vector_translation}\\
    \bR(\underline{\bY})
    &=
    \bU+\bR(\underline{\bN}).
    \label{eq:app_upper_vector_translation}
\end{align}
It follows that
\begin{align}
&\bR(\underline{\bY})
-
\bL(\underline{\bY})
\nonumber\\
&\qquad
\overset{(a)}{=}
\bU+\bR(\underline{\bN})
-
\left(
    \bU+\bL(\underline{\bN})
\right)
\nonumber\\
&\qquad
\overset{(b)}{=}
\bR(\underline{\bN})
-
\bL(\underline{\bN}),
\label{eq:app_range_translation_invariance}
\end{align}
where $(a)$ follows from
\eqref{eq:app_lower_vector_translation} and
\eqref{eq:app_upper_vector_translation}, and $(b)$ follows by
canceling $\bU$. Substituting
\eqref{eq:app_range_translation_invariance} into
\eqref{eq:coordinatewise_acceptance_event} gives
\eqref{eq:app_acceptance_noise_domain}.

For the estimation error, we have
\begin{align}
    \widehat{\bU}
    &\overset{(a)}{=}
    \frac{
        \bR(\underline{\bY})
        +
        \bL(\underline{\bY})
    }{2}
    \nonumber\\
    &\overset{(b)}{=}
    \frac{
        \bU+\bR(\underline{\bN})
        +
        \bU+\bL(\underline{\bN})
    }{2}
    \nonumber\\
    &\overset{(c)}{=}
    \bU
    +
    \frac{
        \bR(\underline{\bN})
        +
        \bL(\underline{\bN})
    }{2},
    \label{eq:app_estimator_noise_decomposition}
\end{align}
where $(a)$ follows from
\eqref{eq:coordinatewise_midrange_estimator}, $(b)$ follows from
\eqref{eq:app_lower_vector_translation} and
\eqref{eq:app_upper_vector_translation}, and $(c)$ follows by
collecting the two copies of $\bU$. Therefore,
\begin{align}
    \left\|
        \bU-\widehat{\bU}
    \right\|_2^2
    &\overset{(a)}{=}
    \left\|
        -
        \frac{
            \bR(\underline{\bN})
            +
            \bL(\underline{\bN})
        }{2}
    \right\|_2^2
    \nonumber\\
    &\overset{(b)}{=}
    \left\|
        \frac{
            \bR(\underline{\bN})
            +
            \bL(\underline{\bN})
        }{2}
    \right\|_2^2,
    \label{eq:app_squared_error_noise_only}
\end{align}
where $(a)$ follows from
\eqref{eq:app_estimator_noise_decomposition}, and $(b)$ follows from
$\|-\bx\|_2^2=\|\bx\|_2^2$. Equation
\eqref{eq:app_squared_error_noise_only} proves
\eqref{eq:app_error_noise_domain}. This completes the proof.

\end{proof}

We are now ready to prove
Theorem~\ref{thm:exact_averaged_strategy_reduction}.

\begin{proof}[Proof of
Theorem~\ref{thm:exact_averaged_strategy_reduction}]

Consider an arbitrary input-dependent adversarial strategy
\begin{align}
    \boldsymbol{g}
    \in
    \Lambda_{\AD}^{\modeldep}(\Nadversarial).
    \label{eq:app_arbitrary_dependent_strategy}
\end{align}
Let $\overline{g}$ and
$\overline{\boldsymbol{g}}$ be defined according to
\eqref{eq:averaged_adversarial_pdf} and
\eqref{eq:averaged_input_independent_strategy}, respectively.
Lemma~\ref{lem:averaged_pdf_is_valid} establishes that
$\overline{g}$ is a valid joint PDF and that
\begin{align}
    \overline{\boldsymbol{g}}
    \in
    \Lambda_{\AD}^{\modelind}(\Nadversarial).
    \label{eq:app_theorem_independent_membership}
\end{align}

We next prove equality of the probabilities of acceptance. For
deterministic realizations
$\underline{\bx}_{\honestset}$ and
$\underline{\bz}_{\adversaryset}$, let
$\underline{\bn}
    \left(
        \underline{\bx}_{\honestset},
        \underline{\bz}_{\adversaryset}
    \right)$
denote the complete node-indexed noise tuple whose honest components
are given by $\underline{\bx}_{\honestset}$ and whose adversarial
components are given by
$\underline{\bz}_{\adversaryset}$. We define the noise-domain
acceptance region as
\begin{align}
\mathcal{D}_{\eta}
\triangleq
\Bigg\{
\left(
    \underline{\bx}_{\honestset},
    \underline{\bz}_{\adversaryset}
\right)
\ \Bigg|\
\left\|
\bR
\left(
    \underline{\bn}
    \left(
        \underline{\bx}_{\honestset},
        \underline{\bz}_{\adversaryset}
    \right)
\right)
-
\bL
\left(
    \underline{\bn}
    \left(
        \underline{\bx}_{\honestset},
        \underline{\bz}_{\adversaryset}
    \right)
\right)
\right\|_2
\leq
\eta\Delta
\Bigg\}.
\label{eq:app_noise_acceptance_region}
\end{align}
By Lemma~\ref{lem:noise_domain_representation}, the reports are
accepted if and only if the realized honest and adversarial noises
belong to $\mathcal{D}_{\eta}$.
Under the input-dependent strategy $\boldsymbol{g}$, the probability
of acceptance satisfies
\begin{align}
\pa(\boldsymbol{g},\eta)
&
\overset{(a)}{=}
\int_{\left(\mathbb{R}^{\ndim}\right)^{\Nhonest}}
\int_{\left(\mathbb{R}^{\ndim}\right)^{\Nadversarial}}
\mathbbm{1}
\left\{
\left(
    \underline{\bx}_{\honestset},
    \underline{\bz}_{\adversaryset}
\right)
\in
\mathcal{D}_{\eta}
\right\}
\nonumber\\
&\hspace{40mm}\times
f_{\underline{\bN}_{\honestset},
\underline{\bN}_{\adversaryset}}^{\boldsymbol{g}}
\left(
    \underline{\bx}_{\honestset},
    \underline{\bz}_{\adversaryset}
\right)
d\underline{\bz}_{\adversaryset}
d\underline{\bx}_{\honestset}
\nonumber\\
&\quad
\overset{(b)}{=}
\int_{\left(\mathbb{R}^{\ndim}\right)^{\Nhonest}}
\int_{\left(\mathbb{R}^{\ndim}\right)^{\Nadversarial}}
\mathbbm{1}
\left\{
\left(
    \underline{\bx}_{\honestset},
    \underline{\bz}_{\adversaryset}
\right)
\in
\mathcal{D}_{\eta}
\right\}
\nonumber\\
&\hspace{40mm}\times
f_{\underline{\bN}_{\honestset}}
\left(
    \underline{\bx}_{\honestset}
\right)
\overline{g}
\left(
    \underline{\bz}_{\adversaryset}
\right)
d\underline{\bz}_{\adversaryset}
d\underline{\bx}_{\honestset}
\nonumber\\
&\quad
\overset{(c)}{=}
\int_{\left(\mathbb{R}^{\ndim}\right)^{\Nhonest}}
\int_{\left(\mathbb{R}^{\ndim}\right)^{\Nadversarial}}
\mathbbm{1}
\left\{
\left(
    \underline{\bx}_{\honestset},
    \underline{\bz}_{\adversaryset}
\right)
\in
\mathcal{D}_{\eta}
\right\}
\nonumber\\
&\hspace{40mm}\times
f_{\underline{\bN}_{\honestset},
\underline{\bN}_{\adversaryset}}^{\overline{\boldsymbol{g}}}
\left(
    \underline{\bx}_{\honestset},
    \underline{\bz}_{\adversaryset}
\right)
d\underline{\bz}_{\adversaryset}
d\underline{\bx}_{\honestset}
\nonumber\\
&\quad
\overset{(d)}{=}
\pa(\overline{\boldsymbol{g}},\eta),
\label{eq:app_probability_acceptance_equality}
\end{align}
where $(a)$ follows from
\eqref{eq:probability_of_acceptance},
Lemma~\ref{lem:noise_domain_representation}, and
\eqref{eq:app_noise_acceptance_region}, $(b)$ follows from
\eqref{eq:app_dependent_joint_noise_pdf}, $(c)$ follows from
\eqref{eq:app_independent_joint_noise_pdf}, and $(d)$ follows from the
definition of the probability of acceptance under
$\overline{\boldsymbol{g}}$. This proves
\begin{align}
    \pa(\boldsymbol{g},\eta)
    =
    \pa(\overline{\boldsymbol{g}},\eta).
\end{align}

It remains to prove equality of the accepted mean-squared estimation
errors. For every deterministic pair
$(\underline{\bx}_{\honestset},
\underline{\bz}_{\adversaryset})$, define
\begin{align}
e
\left(
    \underline{\bx}_{\honestset},
    \underline{\bz}_{\adversaryset}
\right)
\triangleq
\left\|
\frac{
\bR
\left(
    \underline{\bn}
    \left(
        \underline{\bx}_{\honestset},
        \underline{\bz}_{\adversaryset}
    \right)
\right)
+
\bL
\left(
    \underline{\bn}
    \left(
        \underline{\bx}_{\honestset},
        \underline{\bz}_{\adversaryset}
    \right)
\right)
}{2}
\right\|_2^2.
\label{eq:app_deterministic_noise_error}
\end{align}
By Lemma~\ref{lem:noise_domain_representation}, this function is
exactly the squared estimation error corresponding to the given noise
realization.
Define the unnormalized accepted squared error under
$\boldsymbol{g}$ as
\begin{align}
    J(\boldsymbol{g},\eta)
    \triangleq
    \mE
    \left[
        \left\|
            \bU-\widehat{\bU}
        \right\|_2^2
        \mathbbm{1}
        \left\{
            \acceptanceevent{\eta}
        \right\}
    \right].
    \label{eq:app_unnormalized_accepted_error}
\end{align}
Using \eqref{eq:app_noise_acceptance_region} and
\eqref{eq:app_deterministic_noise_error}, we obtain
\begin{align}
&J(\boldsymbol{g},\eta)
\nonumber\\
&\quad
\overset{(a)}{=}
\int_{\left(\mathbb{R}^{\ndim}\right)^{\Nhonest}}
\int_{\left(\mathbb{R}^{\ndim}\right)^{\Nadversarial}}
e
\left(
    \underline{\bx}_{\honestset},
    \underline{\bz}_{\adversaryset}
\right)
\mathbbm{1}
\left\{
\left(
    \underline{\bx}_{\honestset},
    \underline{\bz}_{\adversaryset}
\right)
\in
\mathcal{D}_{\eta}
\right\}
\nonumber\\
&\hspace{40mm}\times
f_{\underline{\bN}_{\honestset},
\underline{\bN}_{\adversaryset}}^{\boldsymbol{g}}
\left(
    \underline{\bx}_{\honestset},
    \underline{\bz}_{\adversaryset}
\right)
d\underline{\bz}_{\adversaryset}
d\underline{\bx}_{\honestset}
\nonumber\\
&\quad
\overset{(b)}{=}
\int_{\left(\mathbb{R}^{\ndim}\right)^{\Nhonest}}
\int_{\left(\mathbb{R}^{\ndim}\right)^{\Nadversarial}}
e
\left(
    \underline{\bx}_{\honestset},
    \underline{\bz}_{\adversaryset}
\right)
\mathbbm{1}
\left\{
\left(
    \underline{\bx}_{\honestset},
    \underline{\bz}_{\adversaryset}
\right)
\in
\mathcal{D}_{\eta}
\right\}
\nonumber\\
&\hspace{40mm}\times
f_{\underline{\bN}_{\honestset}}
\left(
    \underline{\bx}_{\honestset}
\right)
\overline{g}
\left(
    \underline{\bz}_{\adversaryset}
\right)
d\underline{\bz}_{\adversaryset}
d\underline{\bx}_{\honestset}
\nonumber\\
&\quad
\overset{(c)}{=}
\int_{\left(\mathbb{R}^{\ndim}\right)^{\Nhonest}}
\int_{\left(\mathbb{R}^{\ndim}\right)^{\Nadversarial}}
e
\left(
    \underline{\bx}_{\honestset},
    \underline{\bz}_{\adversaryset}
\right)
\mathbbm{1}
\left\{
\left(
    \underline{\bx}_{\honestset},
    \underline{\bz}_{\adversaryset}
\right)
\in
\mathcal{D}_{\eta}
\right\}
\nonumber\\
&\hspace{40mm}\times
f_{\underline{\bN}_{\honestset},
\underline{\bN}_{\adversaryset}}^{\overline{\boldsymbol{g}}}
\left(
    \underline{\bx}_{\honestset},
    \underline{\bz}_{\adversaryset}
\right)
d\underline{\bz}_{\adversaryset}
d\underline{\bx}_{\honestset}
\nonumber\\
&\quad
\overset{(d)}{=}
J(\overline{\boldsymbol{g}},\eta),
\label{eq:app_unnormalized_error_equality}
\end{align}
where $(a)$ follows from
\eqref{eq:app_unnormalized_accepted_error},
Lemma~\ref{lem:noise_domain_representation},
\eqref{eq:app_noise_acceptance_region}, and
\eqref{eq:app_deterministic_noise_error}, $(b)$ follows from
\eqref{eq:app_dependent_joint_noise_pdf}, $(c)$ follows from
\eqref{eq:app_independent_joint_noise_pdf}, and $(d)$ follows from the
definition of
$J(\overline{\boldsymbol{g}},\eta)$.

Suppose that the common probability of acceptance is positive, and
define
\begin{align}
    p_{\eta}
    \triangleq
    \pa(\boldsymbol{g},\eta)
    \overset{(a)}{=}
    \pa(\overline{\boldsymbol{g}},\eta)
    >
    0,
    \label{eq:app_common_positive_acceptance}
\end{align}
where $(a)$ follows from
\eqref{eq:app_probability_acceptance_equality}. The accepted
mean-squared estimation error under $\boldsymbol{g}$ satisfies
\begin{align}
    \mse(\boldsymbol{g},\eta)
    &\overset{(a)}{=}
    \frac{
        J(\boldsymbol{g},\eta)
    }{
        \pa(\boldsymbol{g},\eta)
    }
    \nonumber\\
    &\overset{(b)}{=}
    \frac{
        J(\overline{\boldsymbol{g}},\eta)
    }{
        \pa(\overline{\boldsymbol{g}},\eta)
    }
    \nonumber\\
    &\overset{(c)}{=}
    \mse(\overline{\boldsymbol{g}},\eta),
    \label{eq:app_mse_equality}
\end{align}
where $(a)$ follows from the definition of conditional expectation
and \eqref{eq:app_common_positive_acceptance}, $(b)$ follows from
\eqref{eq:app_probability_acceptance_equality} and
\eqref{eq:app_unnormalized_error_equality}, and $(c)$ follows from the
definition of
$\mse(\overline{\boldsymbol{g}},\eta)$.

We have shown that
$\overline{\boldsymbol{g}}$ is an input-independent adversarial
strategy and that, for every $\eta\in\Lambda_{\DC}$,
\begin{align}
    \pa(\boldsymbol{g},\eta)
    &=
    \pa(\overline{\boldsymbol{g}},\eta),\\
    \mse(\boldsymbol{g},\eta)
    &=
    \mse(\overline{\boldsymbol{g}},\eta)
\end{align}
whenever the common probability of acceptance is positive. This
completes the proof of
Theorem~\ref{thm:exact_averaged_strategy_reduction}.

\end{proof}

\section{Proof of Theorem~\ref{thm:performance_region_equivalence}}
\label{app:proof_performance_region_equivalence}
In this appendix, we prove
Theorem~\ref{thm:performance_region_equivalence}. Recall from
\eqref{eq:independent_subset_dependent_action_set} that every
input-independent adversarial strategy is also an admissible
input-dependent adversarial strategy. Therefore, for every
$\eta\in\Lambda_{\DC}$,
\begin{align}
    \performanceregion{\modelind}(\eta)
    \subseteq
    \performanceregion{\modeldep}(\eta).
    \label{eq:app_independent_region_subset_intro}
\end{align}
Thus, to prove the equality in
\eqref{eq:main_performance_region_equality}, it remains only to
establish the reverse inclusion
\begin{align}
    \performanceregion{\modeldep}(\eta)
    \subseteq
    \performanceregion{\modelind}(\eta),
    \qquad
    \forall\eta\in\Lambda_{\DC}.
    \label{eq:app_dependent_region_subset_intro}
\end{align}
The main idea is to consider an arbitrary
performance pair achieved by an input-dependent strategy and apply
Theorem~\ref{thm:exact_averaged_strategy_reduction}. The theorem
constructs an input-independent strategy that induces exactly the same
probability of acceptance and accepted mean-squared estimation error.
Hence, every performance pair achievable in the input-dependent model
is also achievable in the input-independent model.

\begin{proof}[Proof of
Theorem~\ref{thm:performance_region_equivalence}]

Fix an arbitrary threshold $\eta\in\Lambda_{\DC}$. We first prove
\begin{align}
    \performanceregion{\modelind}(\eta)
    \subseteq
    \performanceregion{\modeldep}(\eta).
    \label{eq:app_independent_region_subset}
\end{align}
Consider an arbitrary pair
$(p,e)\in\performanceregion{\modelind}(\eta)$. By
\eqref{eq:achievable_performance_region}, there exists
$\boldsymbol{g}_{\modelind}\in
\Lambda_{\AD}^{\modelind}(\Nadversarial)$ such that
$p=\pa(\boldsymbol{g}_{\modelind},\eta)$,
$e=\mse(\boldsymbol{g}_{\modelind},\eta)$. Moreover,
\eqref{eq:independent_subset_dependent_action_set} implies that
$\boldsymbol{g}_{\modelind}\in
\Lambda_{\AD}^{\modeldep}(\Nadversarial)$. Therefore,
\begin{align}
    (p,e)
    &\overset{(a)}{=}
    \left(
        \pa(\boldsymbol{g}_{\modelind},\eta),
        \mse(\boldsymbol{g}_{\modelind},\eta)
    \right)
    \overset{(b)}{\in}
    \performanceregion{\modeldep}(\eta),
    \label{eq:app_independent_pair_in_dependent_region}
\end{align}
where $(a)$ follows from the choice of
$\boldsymbol{g}_{\modelind}$, and $(b)$ follows from
\eqref{eq:achievable_performance_region},
\eqref{eq:independent_subset_dependent_action_set}. Since $(p,e)$ was arbitrary,
\eqref{eq:app_independent_region_subset} follows. We next prove the reverse inclusion
\begin{align}
    \performanceregion{\modeldep}(\eta)
    \subseteq
    \performanceregion{\modelind}(\eta).
    \label{eq:app_dependent_region_subset}
\end{align}
Consider an arbitrary pair
$(p,e)\in\performanceregion{\modeldep}(\eta)$. By
\eqref{eq:achievable_performance_region}, there exists
$\boldsymbol{g}_{\modeldep}\in
\Lambda_{\AD}^{\modeldep}(\Nadversarial)$ such that
$p=\pa(\boldsymbol{g}_{\modeldep},\eta)$,
$e=\mse(\boldsymbol{g}_{\modeldep},\eta)$.

Let $\overline{\boldsymbol{g}}_{\modeldep}$ denote the averaged
input-independent strategy constructed from
$\boldsymbol{g}_{\modeldep}$ according to
\eqref{eq:averaged_adversarial_pdf} and
\eqref{eq:averaged_input_independent_strategy}. By
Theorem~\ref{thm:exact_averaged_strategy_reduction},
$\overline{\boldsymbol{g}}_{\modeldep}\in
\Lambda_{\AD}^{\modelind}(\Nadversarial)$ and
\begin{align}
    \pa(\boldsymbol{g}_{\modeldep},\eta)
    &=
    \pa(\overline{\boldsymbol{g}}_{\modeldep},\eta),
    \label{eq:app_region_proof_pa_preservation}\\
    \mse(\boldsymbol{g}_{\modeldep},\eta)
    &=
    \mse(\overline{\boldsymbol{g}}_{\modeldep},\eta).
    \label{eq:app_region_proof_mse_preservation}
\end{align}
Therefore, 
\begin{align}
    (p,e)
    &\overset{(a)}{=}
    \left(
        \pa(\boldsymbol{g}_{\modeldep},\eta),
        \mse(\boldsymbol{g}_{\modeldep},\eta)
    \right)
    \nonumber\\
    &\overset{(b)}{=}
    \left(
        \pa(\overline{\boldsymbol{g}}_{\modeldep},\eta),
        \mse(\overline{\boldsymbol{g}}_{\modeldep},\eta)
    \right)
    \overset{(c)}{\in}
    \performanceregion{\modelind}(\eta),
    \label{eq:app_dependent_pair_in_independent_region}
\end{align}
where $(a)$ follows from the choice of
$\boldsymbol{g}_{\modeldep}$, $(b)$ follows from
\eqref{eq:app_region_proof_pa_preservation} and
\eqref{eq:app_region_proof_mse_preservation}, and $(c)$ follows from
\eqref{eq:achievable_performance_region} and the fact that
$\overline{\boldsymbol{g}}_{\modeldep}\in
\Lambda_{\AD}^{\modelind}(\Nadversarial)$. 

Since $(p,e)$ was
arbitrary, \eqref{eq:app_dependent_region_subset} follows. Combining \eqref{eq:app_independent_region_subset} and
\eqref{eq:app_dependent_region_subset}, we obtain
\begin{align}
    \performanceregion{\modeldep}(\eta)
    =
    \performanceregion{\modelind}(\eta).
    \label{eq:app_performance_region_equality}
\end{align}
Since $\eta\in\Lambda_{\DC}$ was arbitrary,
\eqref{eq:app_performance_region_equality} holds for every
$\eta\in\Lambda_{\DC}$. This proves
\eqref{eq:main_performance_region_equality} and completes the proof of
Theorem~\ref{thm:performance_region_equivalence}.

\end{proof}

\section{Proof of Theorem~\ref{thm:stackelberg_equivalence}}
\label{app:proof_stackelberg_equivalence}

In this appendix, we prove
Theorem~\ref{thm:stackelberg_equivalence}. The main idea is that
Theorem~\ref{thm:performance_region_equivalence} establishes that, for
every threshold $\eta\in\Lambda_{\DC}$, the input-dependent and
input-independent adversarial models induce exactly the same set of
achievable probability-of-acceptance and mean-squared-error pairs.
Since the utilities of both players depend on an adversarial strategy
only through these two quantities, the adversary obtains the same
optimal utility in the two models for every fixed threshold.
Furthermore, the DC faces the same worst-case utility among the
adversary's best responses. Therefore, the two games induce the same
optimization problem for the DC and, consequently, the same set of
optimal thresholds.

\begin{proof}[Proof of
Theorem~\ref{thm:stackelberg_equivalence}]

Fix an arbitrary threshold $\eta\in\Lambda_{\DC}$. By
Theorem~\ref{thm:performance_region_equivalence}, we have
\begin{align}
    \performanceregion{\modeldep}(\eta)
    =
    \performanceregion{\modelind}(\eta).
    \label{eq:app_stackelberg_common_region}
\end{align}
We first show that the adversary obtains the same optimal utility in
the two games at the fixed threshold $\eta$. Using the definition of
the achievable performance region in
\eqref{eq:achievable_performance_region}, we have
\begin{align}
&\max_{
    \boldsymbol{g}\in
    \Lambda_{\AD}^{\modeldep}(\Nadversarial)
}
\sU_{\AD}(\boldsymbol{g},\eta)
\nonumber\\
&\qquad
\overset{(a)}{=}
\max_{
    (p,e)\in
    \performanceregion{\modeldep}(\eta)
}
Q_{\AD}(e,p)
\nonumber\\
&\qquad
\overset{(b)}{=}
\max_{
    (p,e)\in
    \performanceregion{\modelind}(\eta)
}
Q_{\AD}(e,p)
\nonumber\\
&\qquad
\overset{(c)}{=}
\max_{
    \boldsymbol{g}\in
    \Lambda_{\AD}^{\modelind}(\Nadversarial)
}
\sU_{\AD}(\boldsymbol{g},\eta),
\label{eq:app_adversary_optimal_utility_fixed_eta}
\end{align}
where $(a)$ follows from the definition of
$\sU_{\AD}$ in \eqref{eq:adversary_utility} and the definition of
$\performanceregion{\modeldep}(\eta)$, $(b)$ follows from
\eqref{eq:app_stackelberg_common_region}, and $(c)$ follows from the
definition of $\sU_{\AD}$ and the definition of
$\performanceregion{\modelind}(\eta)$.
Let us define the set of performance pairs that maximize the
adversary's utility at threshold $\eta$ as
\begin{align}
    \mathcal{P}_{\AD}^{\eta}
    \triangleq
    \argmax_{
        (p,e)\in
        \performanceregion{\modeldep}(\eta)
    }
    Q_{\AD}(e,p).
    \label{eq:app_common_adversarial_best_performance_set}
\end{align}
Because of \eqref{eq:app_stackelberg_common_region}, the same set can
equivalently be written as
\begin{align}
    \mathcal{P}_{\AD}^{\eta}
    =
    \argmax_{
        (p,e)\in
        \performanceregion{\modelind}(\eta)
    }
    Q_{\AD}(e,p).
    \label{eq:app_common_adversarial_best_performance_set_ind}
\end{align}

We next characterize the performance pairs induced by the
adversary's best responses. More precisely, we prove that, for every
$\sigma\in\{\modeldep,\modelind\}$, the set of performance pairs
induced by the strategies in
$\mathcal{B}_{\AD,\sigma}^{\eta}$ (defined in \eqref{eq:adversarial_best_response}) is exactly
$\mathcal{P}_{\AD}^{\eta}$. That is,
\begin{align}
\mathcal{P}_{\AD}^{\eta}
=
\Big\{
    \big(
        \pa(\boldsymbol{g},\eta),
        \mse(\boldsymbol{g},\eta)
    \big)
    \,\Big|\,
    \boldsymbol{g}\in
    \mathcal{B}_{\AD,\sigma}^{\eta}
\Big\}.
\label{eq:app_best_response_performance_characterization}
\end{align}

To verify the inclusion from right to left in
\eqref{eq:app_best_response_performance_characterization}, consider
an arbitrary
$\boldsymbol{g}\in\mathcal{B}_{\AD,\sigma}^{\eta}$ and define
\begin{align}
    p_{\boldsymbol{g}}
    \triangleq
    \pa(\boldsymbol{g},\eta),
    \qquad
    e_{\boldsymbol{g}}
    \triangleq
    \mse(\boldsymbol{g},\eta).
    \label{eq:app_best_response_induced_pair}
\end{align}
For every
$(p,e)\in\performanceregion{\sigma}(\eta)$, there exists
$\widetilde{\boldsymbol{g}}\in
\Lambda_{\AD}^{\sigma}(\Nadversarial)$ that induces $(p,e)$. Hence,
\begin{align}
    Q_{\AD}
    \left(
        e_{\boldsymbol{g}},
        p_{\boldsymbol{g}}
    \right)
    &\overset{(a)}{=}
    \sU_{\AD}(\boldsymbol{g},\eta)
    \nonumber\\
    &\overset{(b)}{\geq}
    \sU_{\AD}(\widetilde{\boldsymbol{g}},\eta)
    \nonumber\\
    &\overset{(c)}{=}
    Q_{\AD}(e,p),
    \label{eq:app_best_response_pair_maximizes_utility}
\end{align}
where $(a)$ and $(c)$ follow from
\eqref{eq:adversary_utility}, and $(b)$ follows from the definition of
$\mathcal{B}_{\AD,\sigma}^{\eta}$ in
\eqref{eq:adversarial_best_response}. Since \eqref{eq:app_best_response_pair_maximizes_utility} holds for
every $(p,e)\in\performanceregion{\sigma}(\eta)$, the pair
$(p_{\boldsymbol{g}},e_{\boldsymbol{g}})$ maximizes
$Q_{\AD}(e,p)$ over $\performanceregion{\sigma}(\eta)$. Therefore, by
the definition of $\mathcal{P}_{\AD}^{\eta}$ in
\eqref{eq:app_common_adversarial_best_performance_set} and
\eqref{eq:app_common_adversarial_best_performance_set_ind}, we have
\begin{align}
    (p_{\boldsymbol{g}},e_{\boldsymbol{g}})
    \in
    \mathcal{P}_{\AD}^{\eta}.
\end{align}

For the reverse inclusion, consider an arbitrary
$(p,e)\in\mathcal{P}_{\AD}^{\eta}$. By the definition of the
achievable performance region, there exists
$\boldsymbol{g}\in
\Lambda_{\AD}^{\sigma}(\Nadversarial)$ satisfying
\begin{align}
    p=\pa(\boldsymbol{g},\eta),
    \qquad
    e=\mse(\boldsymbol{g},\eta).
    \label{eq:app_strategy_realizing_optimal_pair}
\end{align}
For any
$\widetilde{\boldsymbol{g}}\in
\Lambda_{\AD}^{\sigma}(\Nadversarial)$, let
$\widetilde p=\pa(\widetilde{\boldsymbol{g}},\eta)$ and
$\widetilde e=\mse(\widetilde{\boldsymbol{g}},\eta)$. Since
$(p,e)\in\mathcal{P}_{\AD}^{\eta}$, we obtain
\begin{align}
    \sU_{\AD}(\boldsymbol{g},\eta)
    &\overset{(a)}{=}
    Q_{\AD}(e,p)
    \nonumber\\
    &\overset{(b)}{\geq}
    Q_{\AD}(\widetilde e,\widetilde p)
    \nonumber\\
    &\overset{(c)}{=}
    \sU_{\AD}(\widetilde{\boldsymbol{g}},\eta),
    \label{eq:app_optimal_pair_induces_best_response}
\end{align}
where $(a)$ follows from
\eqref{eq:app_strategy_realizing_optimal_pair} and
\eqref{eq:adversary_utility}, $(b)$ follows from the definition of
$\mathcal{P}_{\AD}^{\eta}$ in
\eqref{eq:app_common_adversarial_best_performance_set} and
\eqref{eq:app_common_adversarial_best_performance_set_ind}, and $(c)$
follows from \eqref{eq:adversary_utility}. Since
\eqref{eq:app_optimal_pair_induces_best_response} holds for every
$\widetilde{\boldsymbol{g}}\in
\Lambda_{\AD}^{\sigma}(\Nadversarial)$, the strategy
$\boldsymbol{g}$ maximizes the adversary's utility over
$\Lambda_{\AD}^{\sigma}(\Nadversarial)$. Therefore, by the definition
of the adversarial best-response set in
\eqref{eq:adversarial_best_response}, we have
\begin{align}
    \boldsymbol{g}
    \in
    \mathcal{B}_{\AD,\sigma}^{\eta}.
\end{align}
This proves the reverse inclusion and completes the proof of
\eqref{eq:app_best_response_performance_characterization}.

It now follows that the worst-case utility of the DC at the fixed
threshold $\eta$ is the same in the two games. In fact,
\begin{align}
&\min_{
    \boldsymbol{g}\in
    \mathcal{B}_{\AD,\modeldep}^{\eta}
}
\sU_{\DC}(\boldsymbol{g},\eta)
\nonumber\\
&\qquad
\overset{(a)}{=}
\min_{
    (p,e)\in
    \mathcal{P}_{\AD}^{\eta}
}
Q_{\DC}(e,p)
\nonumber\\
&\qquad
\overset{(b)}{=}
\min_{
    \boldsymbol{g}\in
    \mathcal{B}_{\AD,\modelind}^{\eta}
}
\sU_{\DC}(\boldsymbol{g},\eta),
\label{eq:app_fixed_threshold_dc_objective_equality}
\end{align}
where $(a)$ and $(b)$ follow from
\eqref{eq:dc_utility} and
\eqref{eq:app_best_response_performance_characterization}.

Since \eqref{eq:app_fixed_threshold_dc_objective_equality} holds for
every $\eta\in\Lambda_{\DC}$, the two functions optimized by the DC
are identical over $\Lambda_{\DC}$. Therefore,
\begin{align}
&\argmax_{\eta\in\Lambda_{\DC}}
\;
\min_{
    \boldsymbol{g}\in
    \mathcal{B}_{\AD,\modeldep}^{\eta}
}
\sU_{\DC}(\boldsymbol{g},\eta)
\overset{(a)}{=}
\argmax_{\eta\in\Lambda_{\DC}}
\;
\min_{
    \boldsymbol{g}\in
    \mathcal{B}_{\AD,\modelind}^{\eta}
}
\sU_{\DC}(\boldsymbol{g},\eta),
\label{eq:app_optimal_threshold_set_equality}
\end{align}
where $(a)$ follows from
\eqref{eq:app_fixed_threshold_dc_objective_equality}. This proves
\eqref{eq:optimal_threshold_set_equality}.

We next prove the equality of the equilibrium utilities. Consider an
arbitrary threshold $\eta^{*}$ belonging to the common set in
\eqref{eq:optimal_threshold_set_equality}, and arbitrary strategies
\begin{align}
    \boldsymbol{g}_{\modeldep}^{*}
    &\in
    \overline{\mathcal{B}}_{\AD,\modeldep}^{\eta^{*}},
    \qquad
    \boldsymbol{g}_{\modelind}^{*}
    \in
    \overline{\mathcal{B}}_{\AD,\modelind}^{\eta^{*}}.
    \label{eq:app_arbitrary_equilibrium_strategies}
\end{align}

By the definition of the worst-case best-response sets in
\eqref{eq:worst_case_adversarial_response}, we have
\begin{align}
&\sU_{\DC}
\left(
    \boldsymbol{g}_{\modeldep}^{*},
    \eta^{*}
\right)
\nonumber\\
&\qquad
\overset{(a)}{=}
\min_{
    \boldsymbol{g}\in
    \mathcal{B}_{\AD,\modeldep}^{\eta^{*}}
}
\sU_{\DC}(\boldsymbol{g},\eta^{*})
\nonumber\\
&\qquad
\overset{(b)}{=}
\min_{
    \boldsymbol{g}\in
    \mathcal{B}_{\AD,\modelind}^{\eta^{*}}
}
\sU_{\DC}(\boldsymbol{g},\eta^{*})
\nonumber\\
&\qquad
\overset{(c)}{=}
\sU_{\DC}
\left(
    \boldsymbol{g}_{\modelind}^{*},
    \eta^{*}
\right),
\label{eq:app_equilibrium_dc_utility_equality}
\end{align}
where $(a)$ and $(c)$ follow from
\eqref{eq:worst_case_adversarial_response}, and $(b)$ follows from
\eqref{eq:app_fixed_threshold_dc_objective_equality}. This proves
\eqref{eq:equilibrium_dc_utility_equality}.

Finally, since
$\boldsymbol{g}_{\modeldep}^{*}$ and
$\boldsymbol{g}_{\modelind}^{*}$ belong to their respective
adversarial best-response sets, we have
\begin{align}
&\sU_{\AD}
\left(
    \boldsymbol{g}_{\modeldep}^{*},
    \eta^{*}
\right)
\nonumber\\
&\qquad
\overset{(a)}{=}
\max_{
    \boldsymbol{g}\in
    \Lambda_{\AD}^{\modeldep}(\Nadversarial)
}
\sU_{\AD}(\boldsymbol{g},\eta^{*})
\nonumber\\
&\qquad
\overset{(b)}{=}
\max_{
    \boldsymbol{g}\in
    \Lambda_{\AD}^{\modelind}(\Nadversarial)
}
\sU_{\AD}(\boldsymbol{g},\eta^{*})
\nonumber\\
&\qquad
\overset{(c)}{=}
\sU_{\AD}
\left(
    \boldsymbol{g}_{\modelind}^{*},
    \eta^{*}
\right),
\label{eq:app_equilibrium_ad_utility_equality}
\end{align}
where $(a)$ and $(c)$ follow from the definition of the adversarial
best-response sets in \eqref{eq:adversarial_best_response}, and $(b)$
follows from
\eqref{eq:app_adversary_optimal_utility_fixed_eta}. This proves
\eqref{eq:equilibrium_ad_utility_equality} and completes the proof of
Theorem~\ref{thm:stackelberg_equivalence}.

\end{proof}

\end{document}